\newcommand{\bfi}{\bfseries\itshape}
\newcommand{\rem}[1]{}
\def\b{\begin{eqnarray}}
\def\e{\end{eqnarray}}
\def\tfrac#1{\frac{ #1}}
\def\Ad{{\rm Ad}}
\def\ad{{\rm ad}}
\newtheorem{theorem}{Theorem}[section]
\newtheorem{definition}[theorem]{Definition}
\newtheorem{lemma}[theorem]{Lemma}
\newtheorem{remark}[theorem]{Remark}
\newtheorem{proposition}[theorem]{Proposition}
\makeatletter\@addtoreset{equation}{section}\makeatother
\hfil \underline{$G$-Strands}
\begin{document}


\title{$G$-Strands}
\author{Darryl D. Holm$^{1}$, Rossen I. Ivanov$^{2}$, James R. Percival$^{1}$}
\addtocounter{footnote}{1}
\footnotetext{Department of Mathematics, Imperial College London. London SW7 2AZ, UK.
\texttt{d.holm@ic.ac.uk, j.percival@ic.ac.uk}
\addtocounter{footnote}{1}}
\footnotetext{School of Mathematical Sciences, Dublin Institute of Technology, Kevin Street, Dublin 8, Ireland,  \texttt{rivanov@dit.ie}
}

\date{\it Fondly remembering our late friend, Jerry Marsden}
\maketitle

\makeatother

\maketitle

\rem{
PACS numbers:
}

{\footnotesize\paragraph{Keywords:} Solitons, Hamilton's principle, Integrable
Hamiltonian systems, Inverse Spectral Transform (IST), Chiral models,
Spin chains, Euler-Poincar\'e equations, Sobolev norms, Momentum maps, Bloch-Iserles equation}


\begin{abstract}

A $G$-strand is a map $g(t,{s}):\,\mathbb{R}\times\mathbb{R}\to G$ for a Lie group $G$ that follows from Hamilton's principle for a certain class of $G$-invariant Lagrangians. The $SO(3)$-strand is the $G$-strand version of the rigid body equation and it may be regarded physically as a continuous spin chain. Here, $SO(3)_K$-strand dynamics for ellipsoidal rotations is derived as an Euler-Poincar\'e system for a certain class of variations and recast as a Lie-Poisson system for coadjoint flow with the same Hamiltonian structure as for a perfect complex fluid. For a special Hamiltonian, the $SO(3)_K$-strand is mapped into a completely integrable generalization of the classical chiral model for the $SO(3)$-strand. Analogous results are obtained for the $Sp(2)$-strand. The $Sp(2)$-strand is the $G$-strand version of the $Sp(2)$ Bloch-Iserles ordinary differential equation, whose solutions exhibit dynamical sorting. Numerical solutions show nonlinear interactions of coherent wave-like solutions in both cases. ${\rm Diff}(\mathbb{R})$-strand equations on the diffeomorphism group $G={\rm Diff}(\mathbb{R})$ are also introduced  and shown to admit solutions with singular support (e.g., peakons).
\end{abstract}

\newpage

\tableofcontents

\section{Euler-Poincar\'e equations for a $G$-strand}

The Euler-Poincar\'e (EP) theory of $G$-strands is an extension of the classical chiral models. The classical chiral models are nonlinear relativistically invariant Lagrangian field theories on group manifolds. As such, they are fundamental in theoretical physics. The vast literature of results for these models is fascinating. For example, it is well known that these models are integrable in 1 + 1 dimensions and possess soliton solutions. 
See \cite{Ch1981,Po1976,ZaMi1978,ZaMi1980,deVega1979,NoMaPiZa1984,Mai86,{Man1994}} and references therein for discussions of the many aspects of integrability of the chiral models, including the famous dressing method for explicitly deriving the soliton solutions of these models, which is given in \cite{ZaMi1980}. The solitons for the $O(3)$ chiral model are particularly familiar, because this model allows an integrable reduction to the one-component sine-Gordon equation (e.g. see \cite{NoMaPiZa1984}). 
Many generalizations of these models have been introduced. For example, generalized chiral models with metrics that are not ad-invariant on the Lie algebra are considered in \cite{So1997}. Other generalizations of chiral models for non-semisimple groups are studied in \cite{HlaSno2001}. An integrable chiral model in 2+1 dimensions was proposed in \cite{Wa1988}. Finally, the possibility for fermionic interpretatation of the current variables was explored in \cite{Wi1984}.

\paragraph{Left $G$-Invariant Lagrangian.}
We begin with the following ingredients of EP theory. For more details and in-depth discussion, see \cite{Ho2011,HoMaRa1998}.
\begin{itemize}
\item
Let $G$ be a Lie group. 
A map $g(t,s): \mathbb{R}\times \mathbb{R}\to G$ has two types of tangent vectors, $\dot{g} := g_t \in u_g$
and $g' :=g_s \in v_g$
\item Assume that the function $ L(g,u_g,v_g) : T G\times TG \rightarrow \mathbb{R}$ is left $G$-invariant.

\item  Left $G$--invariance of $L$ permits us to define
$l: \mathfrak{g}\times \mathfrak{g} \rightarrow \mathbb{R}$ by
\[
l(g^{-1} u_g , g^{-1} v_g ) = L(g,u_g,v_g).
\]
Conversely,  this relation defines for any
$l: \mathfrak{g}\times \mathfrak{g} \rightarrow \mathbb{R} $ a left $G$-invariant function
$ L : T G\times TG \rightarrow \mathbb{R} $.
\item For a map $g(t,s): \mathbb{R}\times \mathbb{R}\to G$ let
\[ 
\mathsf{X} (t,s) := g^{ -1} g_t (t,s) =g^{ -1}\dot{g}(t,s)
\quad\hbox{and}\quad
\mathsf{Y} (t,s) := g^{ -1} g_s (t,s)= g^{ -1} g' (t,s)
.\]
\end{itemize}
\begin{lemma}
The left-trivialized tangent vectors $\mathsf{X} (t,s)$ and $\mathsf{Y} (t,s)$ at the identity of $G$ satisfy
\begin{equation} 
\mathsf{Y}_t - \mathsf{X}_s = -\,{\rm ad}_\mathsf{X}\mathsf{Y}
\,.
\label{zero-curv}
\end{equation} 
\end{lemma}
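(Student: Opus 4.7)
The identity is a purely kinematic compatibility condition; it involves no Lagrangian and should follow from the equality of mixed partials $g_{ts}=g_{st}$ together with the product rule. My plan is to differentiate $\mathsf{X}=g^{-1}g_t$ with respect to $s$ and $\mathsf{Y}=g^{-1}g_s$ with respect to $t$, then subtract.

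Using the Leibniz rule and the identity $(g^{-1})_\sigma = -\,g^{-1} g_\sigma\, g^{-1}$ for $\sigma\in\{t,s\}$ (obtained by differentiating $g\,g^{-1}=e$), I would compute
\[
\mathsf{X}_s \;=\; -\,g^{-1}g_s\,g^{-1}g_t + g^{-1}g_{ts} \;=\; -\,\mathsf{Y}\mathsf{X} + g^{-1}g_{ts},
\]
\[
\mathsf{Y}_t \;=\; -\,g^{-1}g_t\,g^{-1}g_s + g^{-1}g_{st} \;=\; -\,\mathsf{X}\mathsf{Y} + g^{-1}g_{st}.
\]
Subtracting the first line from the second and invoking $g_{ts}=g_{st}$ cancels the two second-derivative terms and leaves
\[
\mathsf{Y}_t - \mathsf{X}_s \;=\; \mathsf{Y}\mathsf{X} - \mathsf{X}\mathsf{Y} \;=\; -\,[\mathsf{X},\mathsf{Y}] \;=\; -\,\ad_\mathsf{X}\mathsf{Y},
\]
which is precisely (\ref{zero-curv}).

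The only conceptual subtlety is that the formula $(g^{-1})_\sigma = -g^{-1}g_\sigma g^{-1}$ is written in matrix notation and must be given invariant meaning on an arbitrary Lie group. I would address this by reinterpreting the computation as the pullback of the Maurer--Cartan equation: the map $g:\mathbb{R}\times\mathbb{R}\to G$ pulls the left-invariant Maurer--Cartan form $\theta = g^{-1}dg$ on $G$ back to $g^{*}\theta = \mathsf{X}\,dt + \mathsf{Y}\,ds$ on the parameter plane, and the structural equation $d\theta + \frac{1}{2}[\theta,\theta]=0$ pulls back to $(\mathsf{Y}_t - \mathsf{X}_s)\,dt\wedge ds = -\,[\mathsf{X},\mathsf{Y}]\,dt\wedge ds$, which is (\ref{zero-curv}) intrinsically. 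There is no real obstacle here: the result is essentially the Maurer--Cartan equation in disguise, and the whole argument is a short application of the chain rule once the mixed partials $g_{ts}$ and $g_{st}$ are identified.
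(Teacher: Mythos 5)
Your proof is correct and is exactly the argument the paper intends: it cites the standard computation from equality of cross derivatives $g_{ts}=g_{st}$, which is precisely the product-rule calculation you carry out (and which the paper repeats explicitly when deriving equation (\ref{aux-eqn-2time})). Your closing remark about the Maurer--Cartan form matches the paper's own interpretation of (\ref{zero-curv}) as a zero-curvature relation, so nothing further is needed.
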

\begin{proof}
The proof is standard and follows from equality of cross derivatives $g_{ts}=g_{st}$, cf. \cite{Ho2011,HoMaRa1998}.
As a consequence, equation (\ref{zero-curv}) is often called a \emph{zero-curvature relation}.
\end{proof}
\begin{theorem} [Euler-Poincar\'e theorem]\label{lall}$\,$

With the preceding notation, the following statements are equivalent:
\begin{enumerate}
\item [{\bf i} ] Hamilton's variational principle on $T G\times TG$
\begin{equation} \label{hamiltonprinciple}
\delta \int _{t_1} ^{t_2} L(g(t,s), \dot{g} (t,s), g'(t,s) ) \,ds\,dt = 0
\end{equation}
holds, for variations $\delta g(t,s)$
of $ g (t,s) $ vanishing at the endpoints in $t$ and $s$.
\item [{\bf ii}  ] The function $g(t,s)$ satisfies Euler--Lagrange
equations for $L$ on $G$, given by
\begin{equation} \label{EL-eqns}
\frac{\partial L}{\partial g} - \frac{\partial}{\partial t}\frac{\partial L}{\partial g_t}
- \frac{\partial}{\partial s}\frac{\partial L}{\partial g_s} = 0
\end{equation}

\item [{\bf iii} ]  The constrained variational principle%
\footnote{As with the basic Euler--Poincar\'e equations \cite{MaRa1999}, this is not
strictly a variational principle in the
same sense as the standard Hamilton's principle.
It is more like the Lagrange d'Alembert
principle, because we impose the stated constraints
on the variations allowed \cite{HoMaRa1998}.}
\begin{equation} \label{variationalprinciple}
\delta \int _{t_1} ^{t_2}  l(\mathsf{X}(t,s), \mathsf{Y}(t,s)) \,ds\,dt = 0
\end{equation}
holds on $\mathfrak{g}\times\mathfrak{g}$, using variations of $ \mathsf{X} $ and $\mathsf{Y}$ of the forms
\begin{equation} \label{epvariations}
\delta \mathsf{X} = \dot{\mathsf{Z} } + {\rm ad}_\mathsf{X}\mathsf{Z}
\quad\hbox{and}\quad
\delta \mathsf{Y} = \mathsf{Z}\,' + {\rm ad}_\mathsf{Y} \mathsf{Z} \,, 
\end{equation}
where $\mathsf{Z}(t,s) \in \mathfrak{g}$ vanishes at the endpoints.
\item [{\bf iv}] The {\bfi Euler--Poincar\'{e}}
equations hold on $\mathfrak{g}^*\times\mathfrak{g}^*$
\begin{equation}
\frac{d}{dt} \frac{\delta l}{\delta \mathsf{X}} -
 \operatorname{ad}_{\mathsf{X}}^{\ast} \frac{ \delta l }{ \delta \mathsf{X}}
+
\frac{d}{ds} \frac{\delta l}{\delta \mathsf{Y}} 
-
 \operatorname{ad}_{\mathsf{Y}}^{\ast} \frac{ \delta l }{ \delta \mathsf{Y}}
 =
 0
\, .
\label{EPst-eqn}
\end{equation}
\end{enumerate}
\end{theorem}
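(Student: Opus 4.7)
The standard strategy for a four-part equivalence like this one is to establish the cycle (i) $\Leftrightarrow$ (ii), (i) $\Leftrightarrow$ (iii), (iii) $\Leftrightarrow$ (iv). I would lay these out as three separate computations, each of which reduces to integration by parts combined with the fact that the relevant test variations vanish at the $(t,s)$-endpoints.

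\textbf{(i) $\Leftrightarrow$ (ii).} This is just the classical two-variable calculus of variations applied on the manifold $G$. Taking an arbitrary variation $\delta g(t,s)$ vanishing at the $t$- and $s$-endpoints, expanding $\delta L = \frac{\partial L}{\partial g}\delta g + \frac{\partial L}{\partial g_t}\delta g_t + \frac{\partial L}{\partial g_s}\delta g_s$, and integrating the last two terms by parts in $t$ and $s$ respectively, produces the Euler--Lagrange equations (\ref{EL-eqns}) after invoking the fundamental lemma of the calculus of variations.

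\textbf{(i) $\Leftrightarrow$ (iii).} Here left $G$-invariance is the engine. Using $L(g,\dot g,g') = l(g^{-1}\dot g, g^{-1}g') = l(\mathsf X, \mathsf Y)$, Hamilton's principle (\ref{hamiltonprinciple}) becomes the reduced principle (\ref{variationalprinciple}), provided we understand which variations of $(\mathsf X,\mathsf Y)$ are induced by unconstrained variations $\delta g$. The key calculation is to set $\mathsf Z := g^{-1}\delta g \in \mathfrak g$ and compute
\[
\delta\mathsf X = \delta(g^{-1}g_t) = -g^{-1}(\delta g)g^{-1}g_t + g^{-1}(\delta g)_t = \dot{\mathsf Z} + [\mathsf X,\mathsf Z] = \dot{\mathsf Z} + \mathrm{ad}_{\mathsf X}\mathsf Z,
\]
and identically in $s$ to obtain $\delta \mathsf Y = \mathsf Z' + \mathrm{ad}_{\mathsf Y}\mathsf Z$, exactly the constrained forms in (\ref{epvariations}). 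Since $\mathsf Z$ vanishes at the endpoints iff $\delta g$ does, the two variational principles are equivalent.

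\textbf{(iii) $\Leftrightarrow$ (iv).} I substitute the constrained variations (\ref{epvariations}) into
\[
\delta\int_{t_1}^{t_2}\!\!\int l(\mathsf X,\mathsf Y)\,ds\,dt = \int\!\!\int \Big\langle \tfrac{\delta l}{\delta \mathsf X},\,\dot{\mathsf Z}+\mathrm{ad}_{\mathsf X}\mathsf Z\Big\rangle + \Big\langle \tfrac{\delta l}{\delta \mathsf Y},\,\mathsf Z'+\mathrm{ad}_{\mathsf Y}\mathsf Z\Big\rangle\,ds\,dt,
\]
integrate by parts in $t$ (resp.\ $s$) on the $\dot{\mathsf Z}$ (resp.\ $\mathsf Z'$) terms, drop boundary contributions since $\mathsf Z$ vanishes at the endpoints, and use the definition of the coadjoint action $\langle \mu,\mathrm{ad}_\xi \eta\rangle = \langle \mathrm{ad}^*_\xi \mu,\eta\rangle$ to collect all factors of $\mathsf Z$. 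The fundamental lemma then gives (\ref{EPst-eqn}), and the computation is reversible.

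\textbf{Main obstacle.} The only genuinely nontrivial step is the derivation of the induced variational formulas $\delta\mathsf X = \dot{\mathsf Z} + \mathrm{ad}_{\mathsf X}\mathsf Z$ and $\delta\mathsf Y = \mathsf Z' + \mathrm{ad}_{\mathsf Y}\mathsf Z$, in particular verifying that \emph{every} $\mathsf Z(t,s)$ vanishing at the endpoints can be realised as $g^{-1}\delta g$ for some admissible $\delta g$ (so that the constrained class of variations is neither too large nor too small). Once this reconstruction of $\delta g$ from $\mathsf Z$ is established, and the zero-curvature relation (\ref{zero-curv}) is noted as the compatibility condition of cross derivatives underlying the whole construction, the remaining manipulations are routine integration by parts.
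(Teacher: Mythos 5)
Your proposal is correct and follows essentially the same route as the paper: (i)$\Leftrightarrow$(ii) by the classical two-variable calculus of variations, (i)$\Leftrightarrow$(iii) via left $G$-invariance together with the correspondence $\mathsf{Z}=g^{-1}\delta g$ and the induced variations $\delta\mathsf{X}=\dot{\mathsf{Z}}+{\rm ad}_\mathsf{X}\mathsf{Z}$, $\delta\mathsf{Y}=\mathsf{Z}'+{\rm ad}_\mathsf{Y}\mathsf{Z}$, and (iii)$\Leftrightarrow$(iv) by integration by parts and the definition of ${\rm ad}^*$. You also correctly single out the same key point the paper relies on, namely that every $\mathsf{Z}$ vanishing at the endpoints is realised by an admissible $\delta g$, so no changes are needed.
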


\begin{proof}
The equivalence of {\bf i} and {\bf ii}
holds for any configuration manifold and so, in particular, it
holds in this case.

Next we show the equivalence of {\bf iii} and {\bf iv}.
Indeed, using the definitions,
integrating by parts, and taking into
account that $\mathsf{Z}$ vanishes at the endpoints $(t_1,s_1,t_2,s_2)$, allows one to compute the
variation of the integral as
\begin{align}
\begin{split}
\delta \int_{t_1}^{t_2}\!\!\!\! \int_{s_1}^{s_2}
 l(\mathsf{X}(t,s), \mathsf{Y}(t,s)) \,ds\,dt 
 &=
\int_{t_1}^{t_2}\!\!\!\! \int_{s_1}^{s_2}
\left \langle
\frac{ \delta l}{\delta \mathsf{X}}\,,
\delta \mathsf{X} 
\right \rangle 
+ \left \langle 
\frac{ \delta l}{\delta \mathsf{Y}}\,,
\delta \mathsf{Y} 
\right \rangle \,ds\, dt
\\  &=
\int_{t_1}^{t_2}\!\!\!\! \int_{s_1}^{s_2}
\left \langle
\frac{\delta l}{\delta \mathsf{X}}\,, \dot{\mathsf{Z}} + \operatorname{ad}_{\mathsf{X}} \mathsf{Z} 
 \right \rangle
 +
 \left \langle
\frac{\delta l}{\delta \mathsf{Y}}\,, \mathsf{Z}\,'  +  \operatorname{ad}_{\mathsf{Y}} \mathsf{Z} 
 \right \rangle
 \,ds\, dt
\\ &=
- \int_{t_1}^{t_2}\!\!\!\! \int_{s_1}^{s_2}
 \left \langle 
\frac{ d}{dt}
\frac{\delta l }{ \delta \mathsf{X}}  -
 \operatorname{ad}_{\mathsf{X}}^{\ast}\frac{\delta l}{\delta \mathsf{X}}
 +
 \frac{ d}{ds}
 \frac{\delta l }{ \delta \mathsf{Y}}  -
 \operatorname{ad}_{\mathsf{Y}}^{\ast}\frac{\delta l}{\delta \mathsf{Y}}
 \,, \mathsf{Z}
\right \rangle 
\,ds\, dt
\,.
\end{split}
\label{EPst-calc}
 \end{align}
To understand the notation in the last step, recall that the coadjoint action $(\ad^*: \mathfrak{g}\times \mathfrak{g}^*\to \mathfrak{g}^* )$ is defined as the dual of the adjoint action $(\ad: \mathfrak{g}\times \mathfrak{g}\to \mathfrak{g})$ with respect to the $L^2$ pairing $\langle\,\cdot\,,\,\cdot\,\rangle: \mathfrak{g}^* \times \mathfrak{g}\to \mathbb{R}$ by
\begin{equation}
\Big\langle
{\rm ad}^*_{\mathsf{X}}\frac{\delta\ell}{\delta {\mathsf{X}}}
\,,\,{\mathsf{Z}} \Big\rangle
=
\Big\langle \frac{\delta\ell}{\delta {\mathsf{X}}}
\,,\,{\rm ad}_{\mathsf{X}}{\mathsf{Z}} \Big\rangle
\,.
\label{ad-star-def}
\end{equation}
Thus, the calculation in (\ref{EPst-calc}) allows us to conclude that {\bf iii} and {\bf iv} are equivalent.

Finally we show that {\bf i} and {\bf iii} are equivalent.
First note that the left $G$-invariance of $L:TG
\times TG \rightarrow \mathbb{R}$ and the definitions of $\mathsf{X}(t,s)$ 
and $\mathsf{Y}(t,s)$ imply that the
integrands in (\ref{hamiltonprinciple}) and
(\ref{variationalprinciple}) are equal. Moreover, all variations
$\delta g(t,s) \in TG\times TG$ of $g(t,s)$ with fixed endpoints induce and are
induced by variations $(\delta \mathsf{X}(t,s),\delta \mathsf{Y}(t,s)) \in \mathfrak{g}\times \mathfrak{g}$ 
of $\mathsf{X}(t,s)$ and $\mathsf{Y}(t,s)$ of the form 
\begin{equation}
\delta \mathsf{X} = \dot{\mathsf{Z} } + {\rm ad}_\mathsf{X}\mathsf{Z}
\quad\hbox{and}\quad
\delta \mathsf{Y} = \mathsf{Z}\,' + {\rm ad}_\mathsf{Y} \mathsf{Z} \,, 
\label{delta-xi-zeta}
\end{equation}
with $\mathsf{Z}(t,s) \in \mathfrak{g}$ vanishing at the
endpoints. The relation between $\delta g(t,s)$ and $\mathsf{Z}(t,s)$
is given by $\mathsf{Z}(t,s) = g(t,s)^{-1}\delta g(t,s) $. 

Thus, if {\bf i} holds, we define $\mathsf{Z}(t,s) =  g^{-1}\delta  g(t,s)$ 
for a variation $\delta g(t,s)$ with fixed endpoints. Then
if we let $\mathsf{X} = g^{-1}\dot{g} (t,s)$ and $\mathsf{Y} = g^{-1}g' (t,s)$, we have 
equation (\ref{delta-xi-zeta}) by the same standard calculation that produced (\ref{zero-curv}).  
Conversely, if  equation (\ref{delta-xi-zeta}) holds with $\mathsf{Z}(t,s)$ vanishing at the
endpoints, we define $\delta g(t,s) =g(t,s) \circ  \mathsf{Z}(t,s)$ and the above
proposition guarantees then that this $\delta g(t,s)$ is the
general variation of $g(t,s)$ vanishing at the endpoints. 
Hence, {\bf i} and {\bf iii} are equivalent.
\end{proof}

\begin{remark}
For a \emph{right} $G$-invariant Lagrangian, the results and proofs are the same as above, except for
the \emph{sign change} appearing in the variations, for which $\delta \mathsf{X} = \dot{\mathsf{Z} } - {\rm ad}_\mathsf{X}\mathsf{Z}$ 
and $\delta \mathsf{Y} = \mathsf{Z}\,' - {\rm ad}_\mathsf{Y} \mathsf{Z}$.
\end{remark}

\paragraph{Evolutionary $G$-strand.}
We now define the fundamental quantity of interest in the remainder of the paper. 

\begin{definition}
A {\bf $G$-strand} is an evolutionary map  into a Lie group $G$, $g(t,{s}):\,\mathbb{R}\times\mathbb{R}\to G$, whose dynamics in $t$ and $s$ may be obtained from Hamilton's principle for a $G$-invariant reduced Lagrangian $l: \mathfrak{g}\times\mathfrak{g}\to\mathbb{R}$, where $\mathfrak{g}$ is the Lie algebra of the group $G$. 
The $G$-strand system of evolutionary partial differential equations for a left $G$-invariant reduced Lagrangian consists of the the  \emph{zero-curvature} equation (\ref{zero-curv}) and the Euler-Poincar\'e (EP) variational equations (\ref{EPst-eqn}). 

\end{definition}

\begin{remark}
Subclasses of the $G$-strand maps contain the harmonic maps into Lie groups studied mathematically in \cite{Uh1989} and the principal chiral models of field theory in theoretical physics, reviewed, e.g., in \cite{Man1994}. An interpretation of the $G$-strand equations as  the dynamics of a continuous spin chain is given in \cite{Ho2011}. This is the source of the term, `strand'. See also \cite{ElGBHoPuRa2010}.
\end{remark}

The evolutionary $G$-strand maps that we consider here arise from  Hamilton's principle $\delta S=0$ with $S=\int \ell\,dt$ for a left $G$-invariant Lagrangian $\ell: \mathfrak{g} \times \mathfrak{g} \to \mathbb{R}$ given by
\begin{eqnarray}
S=\int_a^b \!\!\!\int_{-\infty}^\infty\!\! \ell({\mathsf{X}},{\mathsf{Y}})\,d{s}\,dt
\,,
\end{eqnarray}
with $(\mathsf{X}, \mathsf{Y})\in \mathfrak{g}\times\mathfrak{g}$
where  $\mathfrak{g}$ is the left Lie algebra  of the group $G$.
That is, $\mathsf{X}$ and $\mathsf{Y}$ are left-trivialized tangent vectors at the identity of the group, expressible as
\begin{eqnarray}
{\mathsf{X}}(t,{s})=g^{-1}\partial_t g(t,{s})
\quad\hbox{and}\quad
{\mathsf{Y}}(t,{s})=g^{-1}\partial_{s} g(t,{s})
\,.
\label{DefXnY}
\end{eqnarray}

\begin{remark}
The distinction between between the maps 
$(\mathsf{X},\mathsf{Y}): \mathbb{R}\times \mathbb{R}\to \mathfrak{g}\times\mathfrak{g}$ and their pointwise values $(\mathsf{X}(t,s),\mathsf{Y}(t,s))\in  \mathfrak{g}\times\mathfrak{g}$ will always be clear in context, so that no confusion will arise.
Likewise, for the variational derivatives $\frac{\delta\ell}{\delta {\mathsf{X}}}$ and $\frac{\delta\ell}{\delta {\mathsf{Y}}}$. 
\end{remark}\vspace{-5mm}

\paragraph{Overview and organization of the paper.} After this introduction, we shall begin developing $G$-strand dynamics in the example of the rotation group $SO(3)$ in Section \ref{SO3-sec}. In a certain case, this example recovers the well-known completely integrable principal chiral model \cite{ZaMi1978,ZaMi1980}. We extend these considerations in Section \ref{SO3K-sec} to the isotropy group $SO(3)_K$ of a quadratic form defined by a symmetric matrix, ${\sf K}={\sf K}^T$. In Section \ref{Pchiral-sec}, the $SO(3)_K$-strand equations for a certain choice of Lagrangian are identified with the integrable dynamics associated with another variant of the principal chiral model, the completely integrable $P$-chiral model of \cite{BoYa1995,Ya1988}. Section \ref{SP2-sec} considers $G$-strand dynamics for $Sp(2)$, the two-dimensional symplectic group, and extends the Bloch-Iserles ordinary differential equations to the case of partial differential equations for the $Sp(2)$-strand. By using the `gauge' transformation that diagonalizes  ${\sf K}$ and a series of canonical isomorphisms, the $Sp(2)$-strand equations for a particular choice of Lagrangian are also brought into the form of the integrable $SO(3)_K$ P-chiral model. In each of these cases, we apply the Euler-Poincar\'e (EP) framework for group invariant Lagrangians, then pass to the corresponding Lie-Poisson Hamiltonian framework, where we are able to identify Hamiltonians that produce integrable dynamics. Numerical solutions for the $O(3)$ and $Sp(2)$ versions of the $P$-Chiral models are provided in Section \ref{JRPnumerics}. In Section \ref{DiffStrand-sec} we extend the $G$-strand framework to the case of diffeomorphisms on the real line and show that the Diff$(\mathbb{R})$-strand dynamics admits singular solutions associated with a pair of momentum maps. Finally, Section \ref{conclusion-sec} summarizes our results and provides some outlook for future research.

In the standard Euler-Poincar\'e (EP) framework, one applies the following steps \cite{Ho2011,MaRa1999}.
\begin{description}
\item [(a)]
Write the auxiliary equation for the evolution of ${\mathsf{Y}}:\,\mathbb{R}\times\mathbb{R}\to \mathfrak{g}$, obtained by differentiating its definition with respect to time and invoking equality of cross derivatives.

\item [(b)]
Use the Euler-Poincar\'e theorem for left-invariant Lagrangians to obtain the equation of motion for the quantity $\partial\ell/\partial{\mathsf{X}}:\,\mathbb{R}\times\mathbb{R}\to \mathfrak{g}^*$, where $\mathfrak{g}^*$ is the dual Lie algebra. 

The resulting Euler-Poincar\'e equation will be an evolutionary partial differential equation (PDE). We assume homogeneous  boundary conditions on ${\mathsf{X}}(t,{s})$, ${\mathsf{Y}}(t,{s})$ and vanishing endpoint conditions on the variation ${\mathsf{Z}}=g^{-1}\delta g(t,{s})\in\mathfrak{g}$.

\item [(c)]
Legendre transform the Lagrangian to obtain the corresponding Hamiltonian. Differentiate the Hamiltonian and determine its partial derivatives. Write the Euler-Poincar\'e equation in Lie-Poisson Hamiltonian form, in terms of the new ``angular momentum'' variable ${\mathsf{P}}=\delta\ell/\delta{\mathsf{X}}\in\mathfrak{g}^*$.

\item [(d)]
Write the Lie-Poisson Hamiltonian formulation for $G=SO(3)$ in terms of $\mathbb{R}^3$ vectors.

\item [(e)]
Apply the EP procedure to other interesting choices of the Lie group $G$, e.g., $Sp(2)$, and other choices of the Lagrangian $\ell({\mathsf{X}},{\mathsf{Y}})$ in Hamilton's principle.

\end{description}
The EP framework with steps (a)-(e) provides the organization for each of  the sections that follow. 

\section{Euler-Poincar\'e (EP) procedure for $SO(3)$ chiral model}
\label{SO3-sec}

This Euler-Poincar\'e procedure in steps {\bf (a)-(e)} above produces a series of results that are to be described in this section for the $SO(3)$ chiral model. These are the following.

\subsection*{(a) The auxiliary equation}
According to their definitions  in (\ref{DefXnY}),
$
{\mathsf{X}}(t,{s})=g^{-1}\partial_t g(t,{s})
$
and
$
{\mathsf{Y}}(t,{s})=g^{-1}\partial_{s} g(t,{s})
$ are Lie-algebra-valued functions over $\mathbb{R}\times\mathbb{R}$.
The evolution of ${\mathsf{Y}}$ is obtained from these definitions by taking the difference of the two equations for the partial derivatives
$\partial_t{\mathsf{Y}}(t,{s})$ and $\partial_{s}{\mathsf{X}}(t,{s})$
while invoking equality of cross derivatives. Hence, ${\mathsf{Y}}$ evolves by the adjoint operation,
\begin{equation}
\partial_t{\mathsf{Y}}(t,{s}) - \partial_{s}{\mathsf{X}}(t,{s})
= {\mathsf{Y}} \,{\mathsf{X}} - {\mathsf{X}}\,{\mathsf{Y}}
= [{\mathsf{Y}},\, {\mathsf{X}}]
=: -  {\rm ad}_{\mathsf{X}}{\mathsf{Y}}
\,.
\label{aux-eqn-2time}
\end{equation}
This is the auxiliary equation we seek for ${\mathsf{Y}}(t,{s})$.
In differential geometry, this relation is called a {\bfi zero-curvature equation}, because it implies that the curvature vanishes for the Lie-algebra-valued connection one-form 
\[
A=g^{-1}dg(t,{s})={\mathsf{X}} dt + {\mathsf{Y}} d{s}
\,.
\]
Equation (\ref{aux-eqn-2time}) is also expressible in terms of the connection 1-form $A$ as the Mauer-Cartan relation,
\begin{equation}
dA + A\wedge A = 0 
\,.
\label{chiral-eqn1}
\end{equation}
See, e.g., \cite{doCarmo1976} for further discussion of such matters in differential geometry. 

When augmented by the divergence relation $d*A=0$, namely, 
\begin{equation}
d*A = d({\mathsf{X}} d{s} + {\mathsf{Y}} dt) 
= (\partial_t \mathsf{X} - \partial_{s} \mathsf{Y})\, dt\wedge d{s}
= 0
\,,
\label{chiral-eqn2}
\end{equation}
equations (\ref{chiral-eqn1}) and (\ref{chiral-eqn2}) comprise the {\bfi classical chiral model}.  Equation (\ref{chiral-eqn2}) may also be understood as a harmonic map from $(t,{s})\in \mathbb{R}\times\mathbb{R}$ into the Lie group $G$ \cite{Uh1989}.

\subsection*{(b) The Euler-Poincar\'e theorem}
According to the Euler-Poincar\'e theorem, Hamilton's principle $\delta S=0$ for
$
S=\int_{t_1}^{t_2}\!\!\int_{s_1}^{s_2} \ell({\mathsf{X}},{\mathsf{Y}})\,ds\,dt
$
leads to the result in equation (\ref{EPst-calc}),
\begin{equation}
\frac{\partial}{\partial t} \frac{\delta\ell}{\delta {\mathsf{X}}}
- {\rm ad}^*_{\mathsf{X}}\frac{\delta\ell}{\delta {\mathsf{X}}}
=
- \frac{\partial}{\partial {s}}  \frac{\delta\ell}{\delta {\mathsf{Y}}}
+ {\rm ad}^*_{\mathsf{Y}}\frac{\delta\ell}{\delta {\mathsf{Y}}}
\,.
\label{2timeEP}
\end{equation}
This is the Euler-Poincar\'e equation for $\delta\ell/\delta{\mathsf{X}}\in\mathfrak{g}^*$ and it involves $\delta\ell/\delta{\mathsf{Y}}\in\mathfrak{g}^*$.

\begin{proposition}
The Euler-Poincar\'e equation (\ref{2timeEP}) may be written as a {\bfi conservation law} for angular momentum ${\mathsf{P}}=\delta\ell/\delta {\mathsf{X}}$; viz,
\begin{equation}
\frac{\partial}{\partial t}
 \bigg({\Ad}^*_{g(t,{s})^{-1}} \frac{\delta l}{\delta {\mathsf{X}}}\bigg)
=
-\,\frac{\partial}{\partial {s}}
 \bigg({\Ad}^*_{g(t,{s})^{-1}} \frac{\delta l}{\delta {\mathsf{Y}}}\bigg)
 \,.
\label{cons-spinangmom}
\end{equation}

\end{proposition}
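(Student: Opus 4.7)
The plan is to show that the conservation law (\ref{cons-spinangmom}) is just equation (\ref{2timeEP}) dressed by the coadjoint transport operator $\Ad^*_{g(t,s)^{-1}}$, so the key identity I need is the differentiation rule for $\Ad^*_{g^{-1}}\mu$ along a one-parameter curve in $G$. The rest is a one-line algebraic manipulation.

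First I would record the auxiliary lemma: if $g(\tau)$ is a smooth curve in $G$ with left-trivialized velocity $\mathsf{V}:=g^{-1}\dot g\in\mathfrak g$, then for any smooth $\mu(\tau)\in\mathfrak g^*$,
\begin{equation*}
\frac{d}{d\tau}\bigl(\Ad^*_{g(\tau)^{-1}}\mu(\tau)\bigr)
=\Ad^*_{g(\tau)^{-1}}\!\left(\dot\mu(\tau)-\ad^*_{\mathsf{V}}\mu(\tau)\right).
\end{equation*}
The quickest derivation is to pair both sides against a fixed $\xi\in\mathfrak g$: unfolding the definition of $\Ad^*$, I compute $\tfrac{d}{d\tau}(g^{-1}\xi g)=[\Ad_{g^{-1}}\xi,\mathsf{V}]=-\ad_{\mathsf{V}}(\Ad_{g^{-1}}\xi)$ directly from $\tfrac{d}{d\tau}g^{-1}=-g^{-1}\dot g\,g^{-1}$, then dualize using the definition (\ref{ad-star-def}) of $\ad^*$. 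This is the standard "coadjoint transport" identity and is the only nontrivial input needed.

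With this identity in hand, I apply it twice: once with $\tau=t$, $\mathsf{V}=\mathsf{X}$, $\mu=\delta\ell/\delta\mathsf{X}$, and once with $\tau=s$, $\mathsf{V}=\mathsf{Y}$, $\mu=\delta\ell/\delta\mathsf{Y}$. Adding the two resulting expressions and factoring the common $\Ad^*_{g^{-1}}$ (which is linear in its argument at each fixed $(t,s)$) gives
\begin{equation*}
\frac{\partial}{\partial t}\!\left(\Ad^*_{g^{-1}}\tfrac{\delta\ell}{\delta\mathsf{X}}\right)
+\frac{\partial}{\partial s}\!\left(\Ad^*_{g^{-1}}\tfrac{\delta\ell}{\delta\mathsf{Y}}\right)
=\Ad^*_{g^{-1}}\!\left(\frac{\partial}{\partial t}\tfrac{\delta\ell}{\delta\mathsf{X}}-\ad^*_{\mathsf{X}}\tfrac{\delta\ell}{\delta\mathsf{X}}+\frac{\partial}{\partial s}\tfrac{\delta\ell}{\delta\mathsf{Y}}-\ad^*_{\mathsf{Y}}\tfrac{\delta\ell}{\delta\mathsf{Y}}\right).
\end{equation*}
The bracketed expression on the right is precisely (after rearrangement) the Euler--Poincar\'e equation (\ref{2timeEP}), hence vanishes. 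Since $\Ad^*_{g^{-1}}$ is a pointwise isomorphism on $\mathfrak g^*$, the equivalence actually runs both ways, yielding (\ref{cons-spinangmom}).

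I do not expect a real obstacle here: the only subtlety is choosing the sign convention for $\tfrac{d}{d\tau}\Ad_{g^{-1}}$ consistent with the \emph{left}-trivialization $\mathsf{X}=g^{-1}\dot g$ used throughout the paper. The mildly delicate bookkeeping step is verifying that the commutator relation emerges with the sign $-\ad_{\mathsf{V}}$ rather than $+\ad_{\mathsf{V}}$; once that sign is pinned down, dualizing to $\mathfrak g^*$ is automatic and the rest of the proof is purely formal.
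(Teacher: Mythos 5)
Your proposal is correct and follows essentially the same route as the paper: the paper's proof also rests on the coadjoint transport identity $\frac{d}{d\tau}\bigl(\Ad^*_{g(\tau)^{-1}}\mathsf{Q}\bigr)=\Ad^*_{g(\tau)^{-1}}\bigl[\dot{\mathsf{Q}}-\ad^*_{\mathsf{V}}\mathsf{Q}\bigr]$ (cited there as a standard result), applied in $t$ to $\delta\ell/\delta\mathsf{X}$ and in $s$ to $\delta\ell/\delta\mathsf{Y}$ so that the Euler--Poincar\'e equation (\ref{2timeEP}) turns into the conservation law (\ref{cons-spinangmom}). The only difference is that you also sketch the derivation of the identity by pairing against a fixed $\xi\in\mathfrak g$, which the paper simply quotes; your sign bookkeeping with the left-trivialized velocity is consistent with the paper's conventions.
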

\begin{proof}
This formula follows from a standard result \cite{Ho2011}.
Namely, let $g(s)$ be a path with parameter $s$ in a Lie group $G$
and let ${\mathsf{Q}}(s)$ be a path in the dual $\mathfrak{g}^*$ of its Lie algebra $\mathfrak{g}$. Then the coadjoint operation  ${\rm Ad}^*: G \times \mathfrak{g}^* \to \mathfrak{g}^*$ satisfies
\begin{equation}
\frac{d}{ds} {\Ad}^*_{g(s)^{-1}} {\mathsf{Q}}(s)=
{\Ad}^*_{g(s)^{-1}}\left[ \frac{d {\mathsf{Q}}}{ds} - {\ad}^*_{{\mathsf{X}}(s)}{\mathsf{Q}}(s)\right]
,
\label{coAd-dot}
\end{equation}
where ${\mathsf{X}}(s)=g(s)^{-1}\dot{g}(s)\in\mathfrak{g}$. Hence, one finds the conservation law (\ref{cons-spinangmom}) by using (\ref{coAd-dot}) on either side of equation (\ref{2timeEP}).
\end{proof}

\paragraph{Summary.} In summary, the $G$-strand equations form a system of partial differential equations (PDE) comprising the Euler-Poincar\'e equation (\ref{2timeEP}) and its auxiliary zero-curvature equation (\ref{aux-eqn-2time}), as follows, 
\begin{eqnarray}
\frac{\partial}{\partial t} \frac{\delta\ell}{\delta {\mathsf{X}}}
+ \frac{\partial}{\partial {s}}  \frac{\delta\ell}{\delta {\mathsf{Y}}}
&=& {\rm ad}^*_{\mathsf{X}}\frac{\delta\ell}{\delta {\mathsf{X}}}
+ {\rm ad}^*_{\mathsf{Y}}\frac{\delta\ell}{\delta {\mathsf{Y}}}
\,,
\label{Gstrand-eqn1}
\\ \nonumber\\
\frac{\partial}{\partial t} {\mathsf{Y}} - \frac{\partial}{\partial {s}} {\mathsf{X}}
&=& - \, {\rm ad}_{\mathsf{X}}{\mathsf{Y}}
\,.
\label{Gstrand-eqn2}
\end{eqnarray}

\begin{remark}
The generalizations of the $G$-strand equations (\ref{Gstrand-eqn1}) and (\ref{Gstrand-eqn2}) to higher spatial dimensions are straightforward. 
\end{remark}

\begin{proposition}
When the Lagrangian $\ell({\mathsf{X}},{\mathsf{Y}}):\mathfrak{g}\to \mathbb{R}$ is given by
\begin{equation}\label{KE-PE-Lag}
\ell({\mathsf{X}},{\mathsf{Y}}) 
=
\frac12|\mathsf{X}|^2 - \frac12|\mathsf{Y}|^2
\,,\end{equation}
then Hamilton's principle recovers the chiral model equation (\ref{chiral-eqn2}).
\end{proposition}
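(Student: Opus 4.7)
The plan is to substitute the specific Lagrangian (\ref{KE-PE-Lag}) directly into the general Euler--Poincaré system (\ref{Gstrand-eqn1})--(\ref{Gstrand-eqn2}) and check that it collapses to the divergence relation $\partial_t \mathsf{X} - \partial_s \mathsf{Y} = 0$ displayed in (\ref{chiral-eqn2}); the zero-curvature half (\ref{Gstrand-eqn2}) is already automatic from the definitions of $\mathsf{X}$ and $\mathsf{Y}$ via Lemma~1.1, so the only real work is with (\ref{Gstrand-eqn1}).

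The first step is to compute the functional derivatives. Since the norm $|\cdot|$ on $\mathfrak{so}(3)$ is induced by an inner product (the Frobenius/Killing pairing), identifying $\mathfrak{g}^*$ with $\mathfrak{g}$ through that pairing gives $\delta\ell/\delta\mathsf{X} = \mathsf{X}$ and $\delta\ell/\delta\mathsf{Y} = -\mathsf{Y}$. Plugging these into (\ref{Gstrand-eqn1}) yields
\begin{equation*}
\partial_t \mathsf{X} - \partial_s \mathsf{Y} \;=\; \mathrm{ad}^*_{\mathsf{X}} \mathsf{X} \;-\; \mathrm{ad}^*_{\mathsf{Y}} \mathsf{Y}.
\end{equation*}

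The second step is to observe that on $\mathfrak{so}(3)$ the inner product is \emph{ad-invariant}: $\langle \mathrm{ad}_\mathsf{X} \mathsf{U}, \mathsf{V}\rangle + \langle \mathsf{U}, \mathrm{ad}_\mathsf{X} \mathsf{V}\rangle = 0$. Using this together with the definition (\ref{ad-star-def}) of $\mathrm{ad}^*$ gives, under the identification above, $\mathrm{ad}^*_\mathsf{X} \mathsf{U} = -\mathrm{ad}_\mathsf{X} \mathsf{U}$. In particular $\mathrm{ad}^*_\mathsf{X} \mathsf{X} = 0$ and $\mathrm{ad}^*_\mathsf{Y} \mathsf{Y} = 0$, so the right-hand side vanishes identically and we are left with $\partial_t \mathsf{X} - \partial_s \mathsf{Y} = 0$, which is precisely (\ref{chiral-eqn2}). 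Writing this in terms of the connection $A = \mathsf{X}\, dt + \mathsf{Y}\, ds$ and its Hodge dual recovers $d{*}A = 0$.

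There is no serious obstacle; the only point requiring care is the justification of the step $\mathrm{ad}^*_\mathsf{X} \mathsf{X} = 0$, which is a genuine use of the special feature of $\mathfrak{so}(3)$ (bi-invariance of the metric) and would fail for a general non-semisimple Lie algebra or a non-ad-invariant quadratic form. This is exactly the reason the subsequent sections must enlarge the framework before treating $SO(3)_K$ and $Sp(2)$, where the analogous computation produces a genuinely nonzero right-hand side.
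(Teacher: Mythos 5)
Your proposal is correct and follows essentially the same route as the paper, which simply notes that the result follows "by comparing equations (\ref{2timeEP}) and (\ref{chiral-eqn2}) for this choice of Lagrangian"; you have merely made explicit the substitution $\delta\ell/\delta\mathsf{X}=\mathsf{X}$, $\delta\ell/\delta\mathsf{Y}=-\mathsf{Y}$ and the vanishing of $\mathrm{ad}^*_\mathsf{X}\mathsf{X}$ and $\mathrm{ad}^*_\mathsf{Y}\mathsf{Y}$ via the ad-invariant pairing, which is exactly the computation the paper leaves implicit. Your closing remark about why this cancellation is special to the ad-invariant case, and fails for $SO(3)_K$ and $Sp(2)$, is a correct and useful observation consistent with the later sections.
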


\begin{proof}
The proof follows by comparing equations (\ref{2timeEP}) and (\ref{chiral-eqn2}) for this choice of Lagrangian.
\end{proof}

\subsection*{(c) Lie-Poisson Hamiltonian formulation}

Legendre transforming the Lagrangian
$\ell({{\mathsf{X}},{\mathsf{Y}}}):\,
\mathfrak{g}\times \mathfrak{g}\to\mathbb{R}$ yields the following Hamiltonian
$h({{\mathsf{P}},{\mathsf{Y}}}):\,
\mathfrak{g}^*\times \mathfrak{g}\to\mathbb{R}$
\begin{equation}
h({{\mathsf{P}}},{{\mathsf{Y}}})
=
\Big\langle{{\mathsf{P}}}\,,\,{{\mathsf{X}}}\Big\rangle
-
\ell({{\mathsf{X}},{\mathsf{Y}}})
\,.
\label{leglagham}
\end{equation}
Its partial derivatives are determined from
\begin{eqnarray*}
\delta{h}
&=& \Big\langle\,\delta{\mathsf{P}}\,,\, \frac{\delta h}{\delta{\mathsf{P}}}\,\Big\rangle
+
\Big\langle\,\frac{\delta h}{\delta{\mathsf{Y}}}\,,\, \delta{\mathsf{Y}}\,\Big\rangle
\\
&=& \Big\langle\,\delta{\mathsf{P}}\,,\, {\mathsf{X}}\,\Big\rangle
+
\Big\langle\,{\mathsf{P}} - \frac{\delta l}{\delta {\mathsf{X}}}
\,,\,\delta {\mathsf{X}}\,\Big\rangle
-\Big\langle\,\frac{\delta \ell}{\delta{\mathsf{Y}}}\,,\, \delta{\mathsf{Y}}\,\Big\rangle
\,,\\
&\Rightarrow&
\frac{\delta l}{\delta {\mathsf{X}}} = {\mathsf{P}}
\,,\quad
\frac{\delta h}{\delta {\mathsf{P}}} = {\mathsf{X}}
\quad\hbox{and}\quad
\frac{\delta h}{\delta {\mathsf{Y}}} 
= -\,\frac{\delta \ell}{\delta {\mathsf{Y}}} = \mathsf{Y}
.
\end{eqnarray*}
Vanishing of the middle term in the second line defines the momentum ${\mathsf{P}}\in\mathfrak{g}^*$.
These derivatives allow one to rewrite the Euler-Poincar\'e equation solely in terms of momentum ${\mathsf{P}}$ as
\begin{eqnarray}
{\partial_t} {\mathsf{P}}
&=& {\rm ad}^*_{\delta h/\delta {\mathsf{P}}}\, {\mathsf{P}}
+ \partial_{s} \frac{\delta h}{\delta {\mathsf{Y}}}
- {\rm ad}^*_{\mathsf{Y}}\,\frac{\delta h}{\delta {\mathsf{Y}}}
\,,\nonumber\\
\partial_t {\mathsf{Y}}
&=& \partial_{s}\frac{\delta h}{\delta {\mathsf{P}}}
-  {\rm ad}_{\delta h/\delta {\mathsf{P}}}\,{\mathsf{Y}}
\,.
\label{hameqns-so3}
\end{eqnarray}
%


The corresponding evolution equation for any functional of $f({\mathsf{P}},{\mathsf{Y}})$ then follows as
\begin{eqnarray*}
\frac{\partial}{\partial t}f({\mathsf{P}},{\mathsf{Y}})
&=&
\Big\langle\,{\partial_t}{\mathsf{P}}\,,\,\frac{\delta f}{\delta {\mathsf{P}}}\,\Big\rangle
+
\Big\langle\,{\partial_t}{\mathsf{Y}}\,,\,\frac{\delta f}{\delta {\mathsf{Y}}}\,\Big\rangle
\\
&=&
 \Big\langle\,{\rm ad}^*_{\delta h/\delta {\mathsf{P}}} {\mathsf{P}}
+
 \partial_{s} \frac{\delta h}{\delta {\mathsf{Y}}}
- {\rm ad}^*_{\mathsf{Y}}\frac{\delta h}{\delta {\mathsf{Y}}}\,,\,
\frac{\delta f}{\delta{\mathsf{P}}}\,\Big\rangle
\\&&+\
\Big\langle\,\partial_{s}\frac{\delta h}{\delta {\mathsf{P}}}
-  {\rm ad}_{\delta h/\delta {\mathsf{P}}}{\mathsf{Y}}\,,\,\frac{\delta f}{\delta {\mathsf{Y}}}\,\Big\rangle
\\
&=&
-\,\Big\langle\,{\mathsf{P}}\,,\,\bigg[
\frac{\delta f}{\delta {\mathsf{P}}}\,,\,\frac{\delta h}{\delta {\mathsf{P}}}\bigg]\,\Big\rangle
\\
&&+\
\Big\langle\, \partial_{s} \frac{\delta h}{\delta {\mathsf{Y}}}
\,,\,
\frac{\delta f}{\delta{\mathsf{P}}}\,\Big\rangle
-
\Big\langle\, \partial_{s} \frac{\delta f}{\delta {\mathsf{Y}}}
\,,\,
\frac{\delta h}{\delta{\mathsf{P}}}\,\Big\rangle
\\&&+\
\Big\langle\,{\mathsf{Y}}
\,,\,
{\rm ad}^*_{\delta f/\delta {\mathsf{P}}}\,\frac{\delta h}{\delta {\mathsf{Y}}}
-  {\rm ad}^*_{\delta h/\delta {\mathsf{P}}}\,\frac{\delta f}{\delta {\mathsf{Y}}}
\,\Big\rangle
\\
&=:&
\{f\,,\,h\}({\mathsf{P}},{\mathsf{Y}})
\,,
\end{eqnarray*}
in which the bracket $\{\,\cdot\,,\,\cdot\,\}$ satisfies the properties defining a Poisson bracket because it is dual to the Lie bracket.
Assembling these equations into
Hamiltonian form gives,
%
\begin{equation} \label{LP-Ham-struct-symbol}
\frac{\partial}{\partial t}
    \begin{bmatrix}
    {\mathsf{P}}
    \\
    {\mathsf{Y}}
    \end{bmatrix}
=
\begin{bmatrix}
  {\rm ad}^\ast_\square {\mathsf{P}}
   &
  \partial_s - {\rm ad}^*_{\mathsf{Y}}
   \\
   \partial_s + {\rm ad}_{\mathsf{Y}}
   & 0
    \end{bmatrix}
    \begin{bmatrix}
   \delta h/\delta{\mathsf{P}} \\
   \delta h/\delta{\mathsf{Y}}
    \end{bmatrix}
\end{equation}
in which the box $\square$ indicates how the ad- and ad$^*$\!- operations occur in the matrix multiplication. In particular,
\[
{\rm ad}^\ast_\square {\mathsf{P}}\,(\delta h/\delta{\mathsf{P}})= {\rm ad}^\ast_{\delta h/\delta{\mathsf{P}}} {\mathsf{P}}
\,,
\]
so each matrix entry acts on its corresponding vector component.%
\footnote{The matrix in equation (\ref{LP-Ham-struct-symbol}) is also the lower right corner of the Hamiltonian matrix for a perfect complex fluid \cite{Ho2002}. It also appears in the Lie-Poisson brackets for Yang-Mills fluids \cite{GiHoKu1982,GiHoKu1983} and for spin glasses \cite{HoKu1988}. See those references for full discussions of the Lie algebraic meaning of such Lie-Poisson brackets for the perfect complex fluid.}

\subsubsection*{Higher dimensions (G-branes)}
\label{secHigherDimensions}
The notation in the Hamiltonian matrix (\ref{LP-Ham-struct-symbol}) indicates how the jump to higher dimensions in the variable $s\in\mathbb{R}^n$ may be made naturally. This is done by using the gradient $\partial_j=\partial/\partial s^j$, $j=1,2,\dots,n$ to define the left invariant auxiliary variable ${\mathsf{Y}}_j\equiv g^{-1}\partial_j g$ in higher dimensions. The lower left entry of the matrix (\ref{LP-Ham-struct-symbol}) defines a covariant  gradient, and its upper right entry defines the adjoint operator, the covariant divergence.  More explicitly, in terms of indices and partial differential
operators, this Hamiltonian matrix becomes,
\begin{equation}
\frac{\partial}{\partial t}
    \begin{bmatrix}
    {\mathsf{P}}_\alpha
    \\
    {\mathsf{Y}}_i^\alpha
    \end{bmatrix}
=
    B_{\alpha\beta}
    \begin{bmatrix}
   \delta h/\delta{\mathsf{P}}_\beta \\
   \delta h/\delta{\mathsf{Y}}_j^\beta
    \end{bmatrix},
    \label{LP-Ham-struct-diff1}
\end{equation}
where the Hamiltonian structure matrix $B_{\alpha\beta}$ is given explicitly as  \cite{Ho2002},
\begin{equation}
B_{\alpha\beta}
=
\begin{bmatrix}
    -{\mathsf{P}}_\kappa \,t_{\alpha\beta}^{\,\kappa}
   &     \delta_\alpha^{\,\beta}\partial_j
   + t_{\alpha\kappa}^{\,\beta} {\mathsf{Y}}_j^{\,\kappa}
      \\
 \delta_\beta^\alpha\partial_i
   -  t_{\beta \kappa}^{\,\alpha}{\mathsf{Y}}_i^ \kappa
& 0
    \end{bmatrix}.
    \label{LP-Ham-struct-diff2}
\end{equation}
Here, the summation convention is enforced on repeated indices. Upper Greek indices refer to the Lie algebraic basis set, lower Greek indices refer to the dual basis and Latin indices refer to the spatial reference frame. The partial derivatives $\partial_i=\partial/\partial s^i$, $i=1,2,\dots,n$, act to the right on all terms in a product by the chain rule.
%
%

\subsection*{(d) Write the Lie-Poisson Hamiltonian form for $G=SO(3)$}

For the case that $t^{\,\alpha}_{\ \beta\kappa}$ are structure constants
for the Lie algebra $\mathfrak{so}(3)$, then $t^{\,\alpha}_{\ \beta\kappa}=\epsilon_{\alpha\beta\kappa}$ with $\epsilon_{123}=+1$ and the Lie bracket on $\mathfrak{so}(3)$ may be identified with the vector product on $\mathbb{R}^3$. The  Lie-Poisson Hamiltonian matrix in (\ref{LP-Ham-struct-diff2}) may then be rewritten in vector form as
\begin{equation} \label{LP-Ham-struct-so3}
\frac{\partial}{\partial t}
    \begin{bmatrix}
    \boldsymbol{{\mathsf{P}}}
    \\
    \boldsymbol{{\mathsf{Y}}}_i
    \end{bmatrix}
=
\begin{bmatrix}
    \boldsymbol{{\mathsf{P}}}\times
   &     \partial_j
   + \boldsymbol{{\mathsf{Y}}}_j\times
      \\
    \partial_i
   + \boldsymbol{{\mathsf{Y}}}_i\times
& 0
    \end{bmatrix}
    \begin{bmatrix}
   \delta h/\delta\boldsymbol{{\mathsf{P}}} \\
   \delta h/\delta\boldsymbol{{\mathsf{Y}}}_j
    \end{bmatrix}.
\end{equation}
Returning to one dimension, one sees that stationary solutions for either $\partial_t\to0$, or $\partial_s\to0$ satisfy equations of the same form as the heavy top. That the equations for either type of solution would have the same form might be expected, because of the exchange symmetry under $t\leftrightarrow s$ and $\boldsymbol{\mathsf{X}}\leftrightarrow \boldsymbol{\mathsf{Y}}$. Perhaps less expected is that these equations would appear in the form of a  heavy top. 
The EP formulation of the  heavy top equations appears, for example, in \cite{Ho2011}.

For $G=SO(3)$ and Lagrangian
$
\ell (\boldsymbol{{\mathsf{X}},\,{\mathsf{Y}}} ): \mathbb{R}^3\times\mathbb{R}^3\to\mathbb{R},
$
 in one spatial dimension the  Euler-Poincar\'e equation and its Hamiltonian form are given in terms of vector operations in $\mathbb{R}^3$, as follows. First, the
Euler-Poincar\'e equation (\ref{2timeEP}) becomes
\begin{equation}
\frac{\partial}{\partial t} \frac{\delta\ell}{\delta \boldsymbol{{\mathsf{X}}}}
= -\,\boldsymbol{{\mathsf{X}}}\times
\frac{\delta\ell}{\delta \boldsymbol{{\mathsf{X}}} }
- \frac{\partial}{\partial {s}}  \frac{\delta\ell}{\delta \boldsymbol{{\mathsf{Y}}} }
-\,{\boldsymbol{{\mathsf{Y}}}}\times
\frac{\delta\ell}{\delta \boldsymbol{{\mathsf{Y}}} }
\,,
\label{2timeEP-SO3}
\end{equation}
and its auxiliary equation (\ref{aux-eqn-2time}) becomes
\begin{equation}
\frac{\partial}{\partial t} \boldsymbol{{\mathsf{Y}}}
=
\frac{\partial}{\partial {s}}{\boldsymbol{{\mathsf{X}}}}
+
{\boldsymbol{{\mathsf{Y}}}}\times{\boldsymbol{{\mathsf{X}}}}
\,.
\label{aux-eqn-2timeX}
\end{equation}


The Hamiltonian structures of these equations on $\mathfrak{so}(3)^*$ are then obtained from the Legendre transform relations
\begin{eqnarray*}
\frac{\delta \ell}{\delta \boldsymbol{{\mathsf{X}}} }
= \boldsymbol{{\mathsf{P}}}
\,,\quad
\frac{\delta h}{\delta \boldsymbol{{\mathsf{P}}}} = \boldsymbol{{\mathsf{X}}}
\quad\hbox{and}\quad
\frac{\delta h}{\delta \boldsymbol{{\mathsf{Y}}} }
= -\,\frac{\delta \ell}{\delta \boldsymbol{{\mathsf{Y}}} }
\,.
\end{eqnarray*}
Hence, the  Euler-Poincar\'e equation (\ref{2timeEP}) becomes
\begin{equation}
\frac{\partial}{\partial t} \boldsymbol{{\mathsf{P}}}
=
\boldsymbol{{\mathsf{P}}} \times \frac{\delta h}{\delta \boldsymbol{{\mathsf{P}}}}
+ \frac{\partial}{\partial {s}}  \frac{\delta h}{\delta \boldsymbol{{\mathsf{Y}}} }
+\,{\boldsymbol{{\mathsf{Y}}}}\times
\frac{\delta h}{\delta \boldsymbol{{\mathsf{Y}}} }
\,,
\label{2timeHam-SO3}
\end{equation}
and the auxiliary equation (\ref{aux-eqn-2time}) becomes
\begin{equation}
\frac{\partial}{\partial t} \boldsymbol{{\mathsf{Y}}}
=
\frac{\partial}{\partial {s}}\frac{\delta h}{\delta \boldsymbol{{\mathsf{P}}}}
+
{\boldsymbol{{\mathsf{Y}}}}\times\frac{\delta h}{\delta \boldsymbol{{\mathsf{P}}}}
\,,
\label{aux-eqn-2timeX}
\end{equation}
which recovers the Lie-Poisson Hamiltonian form in equation (\ref{LP-Ham-struct-so3}).

Finally, the equations for reconstructing $O(t,{s})\in SO(3)$ from the evolution of $\boldsymbol{{\mathsf{X}}}(t,{s})$ and $\boldsymbol{{\mathsf{Y}}}(t,{s})$ may be expressed by using the hat map $\,$ $\widehat{\phantom{}}\,: \mathbb{R}^3\to \mathfrak{so}(3)$, e.g., $\widehat{{\mathsf{X}}}_{ij}=-\,\epsilon_{ijk}{X}^k$, as
\begin{eqnarray}
&&\partial_t O(t,{s})=O(t,{s})\widehat{{\mathsf{X}}}(t,{s})
\quad\hbox{and}
\nonumber\\&&
\partial_{s} O(t,{s})=O(t,{s})\widehat{{\mathsf{Y}}}(t,{s})
\,.
\label{reconstruct-eqn-2time}
\end{eqnarray}

\begin{remark}
The Euler-Poincar\'e equations for the $G$-strand discussed here and their Lie-Poisson Hamiltonian formulation provide a framework for systematically investigating three-dimensional orientation dynamics along a fixed one-dimensional string. These partial differential equations are interesting in their own right and they have many possible applications. For an idea of where the applications of these equations could lead if the string were also allowed to bend self-consistently, consult  \cite{SiMaKr1988, ElGBHoPuRa2010}.
\end{remark}

The remainder of the paper will address the final item (e) in the Euler-Poincar\'e $G$-strand procedure, by making two other choices for the group and the Lagrangian. 

%


\section{The $SO(3)_K$ strand}\label{SO3K-sec}

Another interesting choice for the Lagrangian extends the orthogonal group $SO(3)$ to the isotropy group of a symmetric quadratic form in $n$ dimensions.

\begin{proposition}[Isotropy group]\label{IsotropSubgrp-prop}
Let $K \in GL(3,\mathbb{R})$ be a symmetric matrix, $K=K^T$.
The set
\begin{equation}
SO(3)_K = \{U \in GL(3,\mathbb{R})|\,U^TKU = K\}
\,,
\end{equation}
defines the subgroup $SO(3)_K$ of $GL(3,\mathbb{R})$.
Moreover, this isotropy group of $K$ is a submanifold of $\mathbb{R}^{n\times n}$ of dimension  $n(n-1)/2$.
\end{proposition}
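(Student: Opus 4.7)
The proof splits into two independent pieces: verifying that $SO(3)_K$ is a subgroup and showing it is an embedded submanifold of the indicated dimension. The group axioms are essentially immediate. The identity satisfies $I^T K I = K$; closure under multiplication follows from $(UV)^T K (UV) = V^T (U^T K U) V = V^T K V = K$; and since $K$ is invertible, the defining relation forces $(\det U)^2 = 1$, so $U^{-1} \in GL(n,\mathbb{R})$, and conjugating $U^T K U = K$ by $U^{-1}$ yields $(U^{-1})^T K U^{-1} = K$.

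For the manifold structure, the natural approach is to realize $SO(3)_K$ as a regular level set. Define a smooth map $F \colon GL(n,\mathbb{R}) \to \mathrm{Sym}(n,\mathbb{R})$ by $F(U) := U^T K U$, noting that the image lies in the symmetric matrices because $K = K^T$, so that $SO(3)_K = F^{-1}(K)$. By the regular-value theorem, it suffices to show that the differential $DF(U) \colon M_n(\mathbb{R}) \to \mathrm{Sym}(n,\mathbb{R})$ is surjective at every $U$ in the level set. A direct product-rule computation yields $DF(U)\cdot V = V^T K U + U^T K V$.

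The surjectivity of $DF(U)$ is the main technical step, and I would handle it by exhibiting an explicit preimage. Given any symmetric $S$, take $V := \frac{1}{2} K^{-1} (U^T)^{-1} S$, which is well defined since both $K$ and $U$ are invertible. Using the symmetry of $K^{-1}$ and $S$, one checks $V^T K U = \frac{1}{2} S U^{-1} K^{-1} K U = \frac{1}{2} S$ and $U^T K V = \frac{1}{2} U^T U^{-T} S = \frac{1}{2} S$, so $DF(U)\cdot V = S$. The regular-value theorem then produces an embedded submanifold of dimension $\dim GL(n,\mathbb{R}) - \dim \mathrm{Sym}(n,\mathbb{R}) = n^2 - n(n+1)/2 = n(n-1)/2$, as claimed.

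The one piece of bookkeeping worth emphasizing is that the codomain of $F$ must be taken as $\mathrm{Sym}(n,\mathbb{R})$ rather than the full $M_n(\mathbb{R})$: with the larger codomain the differential can never be surjective and the dimension count comes out wrong. Apart from that, the argument is a direct generalization of the standard realization of $O(n)$ as a submanifold of $GL(n,\mathbb{R})$, which is the special case $K = I$.
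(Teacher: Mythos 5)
Your proof is correct and complete; note that the paper actually states this proposition without giving any proof, so there is nothing of the authors' to compare against. Your argument is the standard one: the subgroup axioms checked directly, then the regular-value theorem applied to $F(U)=U^TKU$ viewed as a map from the open set $GL(n,\mathbb{R})\subset\mathbb{R}^{n\times n}$ into $\mathrm{Sym}(n,\mathbb{R})$, with the explicit choice $V=\tfrac12 K^{-1}(U^T)^{-1}S$ proving surjectivity of $DF(U)$ and giving the dimension count $n^2-n(n+1)/2=n(n-1)/2$; your insistence that the codomain be $\mathrm{Sym}(n,\mathbb{R})$ is precisely the right bookkeeping, and the argument correctly generalizes the $K=\mathrm{Id}$ case of $O(n)$, covering the general $n$ to which the statement silently passes from $n=3$. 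Two minor remarks: since $U\in GL(n,\mathbb{R})$ is assumed, the determinant argument is not needed for invertibility of $U$ (though it does show $\det U=\pm1$), and because $GL(n,\mathbb{R})$ is open in $\mathbb{R}^{n\times n}$, being an embedded submanifold of $GL(n,\mathbb{R})$ does yield the claimed submanifold of $\mathbb{R}^{n\times n}$.
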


\begin{proposition}\label{Lie-algebra-prop}
Let  $\widehat{{\mathsf{X}}}=U^{-1}\dot{U}$ and $\widehat{{\mathsf{Y}}}=U^{-1}{U'}$ for $U\in S$, where ${U'}$ and $\dot{U}$ denote derivatives of $U$ in $s$ and $t$, respectively. Then the matrix quantities
${\mathsf{X}}:=K\widehat{{\mathsf{X}}}$ and ${\mathsf{Y}}:=K\widehat{{\mathsf{Y}}}$ are antisymmetric
\begin{equation}
{\mathsf{X}}^T = -\,{\mathsf{X}}
\quad\hbox{and}\quad
{\mathsf{Y}}^T = -\,{\mathsf{Y}}
\,.
\end{equation}

\end{proposition}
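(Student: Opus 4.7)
The plan is to exploit the defining relation $U^T K U = K$ for the isotropy group, which provides the identity $U^T K = K U^{-1}$ after right multiplication by $U^{-1}$. This identity is the bridge that converts the left-trivialized tangent $\widehat{\mathsf{X}} = U^{-1}\dot{U}$ into something involving $U^T$, where the antisymmetry can be read off directly by differentiating the quadratic form.

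Concretely, I would proceed in two steps. First, differentiate $U^T K U = K$ with respect to $t$. Since $K$ is constant, this yields
\begin{equation*}
\dot{U}^T K U + U^T K \dot{U} = 0,
\end{equation*}
which is precisely the statement that the matrix $U^T K \dot{U}$ is antisymmetric. Second, rewrite $\mathsf{X} = K\widehat{\mathsf{X}} = K U^{-1}\dot{U}$ using the identity $K U^{-1} = U^T K$ (obtained from $U^T K U = K$) to get
\begin{equation*}
\mathsf{X} = K U^{-1}\dot{U} = U^T K \dot{U},
\end{equation*}
and hence $\mathsf{X}^T = -\mathsf{X}$. The argument for $\mathsf{Y}$ is identical, replacing $\partial_t$ with $\partial_s$.

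There is no real obstacle here: the result is essentially a one-line consequence of differentiating the $K$-orthogonality constraint, once one recognizes that $K$ plays the role of converting $U^{-1}$ into $U^T$ (up to conjugation). The only thing worth being careful about is the order of the factors --- one must not mistake $\widehat{\mathsf{X}}^T K$ for $K \widehat{\mathsf{X}}$, since $K$ and $\widehat{\mathsf{X}}$ need not commute; the antisymmetry belongs to the specific combination $K\widehat{\mathsf{X}} = U^T K \dot{U}$, not to $\widehat{\mathsf{X}}$ itself. This also explains, incidentally, why it is natural in the Euler--Poincar\'e framework for $SO(3)_K$ to take $\mathsf{X}$ and $\mathsf{Y}$ (rather than $\widehat{\mathsf{X}}$ and $\widehat{\mathsf{Y}}$) as the fundamental Lie algebra variables, since these live in the antisymmetric matrices.
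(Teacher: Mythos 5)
Your proof is correct, and since the paper dispatches this proposition with the single line ``this is a direct verification,'' your argument --- differentiating the constraint $U^TKU=K$ and using $KU^{-1}=U^TK$ to identify $\mathsf{X}=U^TK\dot{U}$ --- is exactly the direct verification the authors have in mind. The cautionary remark about the ordering of $K$ and $\widehat{\mathsf{X}}$ is a sensible addition but does not change the substance.
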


\begin{proof}
This is a direct verification.
\end{proof}
Let $\widehat{{\mathsf{X}}}'$ and $\widehat{{\mathsf{Y}}}\dot{\phantom{\,\,}}$ denote separate derivatives of $\widehat{{\mathsf{X}}}$ and $\widehat{{\mathsf{Y}}}$.
Then, by an easy calculation one finds,
\[
\widehat{{\mathsf{X}}}'  - \widehat{{\mathsf{Y}}}\dot{\phantom{\,\,}}
=
\widehat{{\mathsf{X}}}\,\widehat{{\mathsf{Y}}}
-
\widehat{{\mathsf{Y}}}\,\widehat{{\mathsf{X}}}
=:  [\widehat{{\mathsf{X}}},\widehat{{\mathsf{Y}}}]
.\]
Hence, upon substituting the definitions of ${\sf X}$ and ${\sf Y}$, one finds
\begin{eqnarray*}
{\sf X}' = K\widehat{{\mathsf{X}}}'
&=&
K \widehat{{\mathsf{Y}}}\dot{\phantom{\,\,}}
+
K[\widehat{{\mathsf{X}}},\widehat{{\mathsf{Y}}}]
\\
&=&
\dot{{\sf Y}} + {\sf X}K^{-1}{\sf Y} - {\sf Y}K^{-1}{\sf X}
\\
&=:&
\dot{{\sf Y}} + [{\sf X},{\sf Y}]_K
\\
&=:&
\dot{{\sf Y}} + \ad^K_{\sf X}\,{\sf Y}
.\end{eqnarray*}
The corresponding auxiliary equation is to be compared with (\ref{aux-eqn-2time}),
\begin{equation}
\partial_t{\mathsf{Y}}(t,{s}) - \partial_{s}{\mathsf{X}}(t,{s})
=
-  {\rm ad}^K_{\mathsf{X}}{\mathsf{Y}}
\,,
\label{aux-eqn-2timeK}
\end{equation}
in which ${\mathsf{Y}}$ evolves by the $K$-adjoint operation.

The same calculation of Hamilton's principle $\delta S=0$ as in the first section now yields the following Euler-Poincar\'e equation, which is to be compared with (\ref{2timeEP}),
\begin{equation}
\frac{\partial}{\partial t} \frac{\delta\ell}{\delta {\mathsf{X}}}
= {\rm ad}^{K*}_{\mathsf{X}}\frac{\delta\ell}{\delta {\mathsf{X}}}
- \frac{\partial}{\partial {s}}  \frac{\delta\ell}{\delta {\mathsf{Y}}}
+ {\rm ad}^{K*}_{\mathsf{Y}}\frac{\delta\ell}{\delta {\mathsf{Y}}}
\,,
\label{2timeEPK}
\end{equation}
in which $\delta\ell/\delta {\mathsf{X}}$ evolves by the $K$-coadjoint operation, defined as
\begin{equation}
\Big\langle
{\rm ad}^{K*}_{\mathsf{X}}\frac{\delta\ell}{\delta {\mathsf{X}}}
\,,\,{\mathsf{Z}} \Big\rangle
=
\Big\langle \frac{\delta\ell}{\delta {\mathsf{X}}}
\,,\,{\rm ad}^K_{\mathsf{X}}{\mathsf{Z}} \Big\rangle
\,.
\label{Kad-star-def}
\end{equation}

Upon specializing to $n=3$ and identifying the Lie bracket on $\mathfrak{so}(3)$ with the vector product on $\mathbb{R}^3$ as we did before, we find the corresponding $K$-vector equations for the motion on the isotropy group $S$, by mapping
\begin{equation}
 {\rm ad}^K_{\mathsf{X}}{\mathsf{Y}} \to K({\boldsymbol{{\mathsf{X}}}}\times{\boldsymbol{{\mathsf{Y}}}})
 \quad\hbox{and}\quad
 {\rm ad}^{K*}_{\mathsf{X}}{\mathsf{P}} \to - \,{\boldsymbol{{\mathsf{X}}}}\times{K\boldsymbol{{\mathsf{P}}}}
\,.
\label{EPKross}
\end{equation}
This means the  Lie-Poisson Hamiltonian matrix in (\ref{LP-Ham-struct-diff2}) may be rewritten in $K$-vector form as
\begin{equation} \label{LP-Ham-struct-Kso3}
\frac{\partial}{\partial t}
    \begin{bmatrix}
    \boldsymbol{{\mathsf{P}}}
    \\
    \boldsymbol{{\mathsf{Y}}}
    \end{bmatrix}
=
\begin{bmatrix}
    K\boldsymbol{{\mathsf{P}}}\times\Box
   &     \partial_{s}
   + \boldsymbol{{\mathsf{Y}}}\times (K\Box)
      \\
    \partial_{s}
   + K(\boldsymbol{{\mathsf{Y}}}\times\Box)
& 0
    \end{bmatrix}
    \begin{bmatrix}
   \delta h/\delta\boldsymbol{{\mathsf{P}}} \\
   \delta h/\delta\boldsymbol{{\mathsf{Y}}}
    \end{bmatrix}.
\end{equation}

In the next section, the $K$-vector equations obtained this way will be compared with the completely integrable $P$-chiral model \cite{BoYa1995,Ya1988}.

\section{Relation to an integrable system: the $P$-chiral model}\label{Pchiral-sec}

For the Lagrangian
$\ell=\frac{1}{2}(|\boldsymbol{\mathsf{X}}|^2-|\boldsymbol{\mathsf{Y}|}^2)$ in (\ref{KE-PE-Lag}) the equations
(\ref{aux-eqn-2timeK}) - (\ref{EPKross}) in vector form are

\begin{equation}
\partial_t{\boldsymbol{\mathsf{Y}}}(t,{s}) - \partial_{s}{\boldsymbol{\mathsf{X}}}(t,{s})
+K ({\boldsymbol{\mathsf{X}}}\times{\boldsymbol{\mathsf{Y}}})=0 \,, \label{XY-eqn-1}
\end{equation}

\begin{equation}
\partial_{s}{\boldsymbol{\mathsf{Y}}}(t,{s}) - \partial_t{\boldsymbol{\mathsf{X}}}(t,{s})
+ {\boldsymbol{\mathsf{Y}}}\times K \boldsymbol{{\mathsf{Y}}}-{\boldsymbol{\mathsf{X}}}\times K
{\boldsymbol{\mathsf{X}}}=0 \,.
\label{XY-eqn-2}
\end{equation}
These may be cast into Lie-Poisson Hamiltonian form, as follows. 
%
\subsection{Hamiltonian formulation of the $P$-chiral model in
$\mathbb{R}^3\times\mathbb{R}^3$}

The Legendre transformation (\ref{leglagham}) in this case yields $\boldsymbol{{\mathsf{P}}}=\boldsymbol{{\mathsf{X}}}$ and
\begin{equation}
h(\boldsymbol{{\mathsf{P}}},\boldsymbol{{\mathsf{Y}}})
=
\frac12 |\boldsymbol{{\mathsf{P}}}|^2
+ \frac12 |\boldsymbol{{\mathsf{Y}}}|^2
\,,
\label{chiral-Ham}
\end{equation}
which is sign-definite.  Equation (\ref{LP-Ham-struct-Kso3}) now verifies  the equations of motion in Lie-Poisson Hamiltonian form,
\begin{eqnarray} \label{LP-Ham-struct-Kso3-H}
{\partial_t} \boldsymbol{{\mathsf{P}}}
&=&
    K\boldsymbol{{\mathsf{P}}}\times\boldsymbol{{\mathsf{P}}}
   +     \partial_{s}\boldsymbol{{\mathsf{Y}}}
   + \boldsymbol{{\mathsf{Y}}}\times (K\boldsymbol{{\mathsf{Y}}})
    \,,\\
    \label{LP-Ham-struct-Kso3-Ha}
{\partial_t}
    \boldsymbol{{\mathsf{Y}}}
&=&
    \partial_{s}\boldsymbol{{\mathsf{P}}}
   + K(\boldsymbol{{\mathsf{Y}}}\times\boldsymbol{{\mathsf{P}}})
   \,.
\end{eqnarray}

We introduce a new diagonal matrix $P$, according to
\begin{equation}
P= \frac{1}{2}\text{tr}(K)\mathbf{1} - K \,, \qquad
K=\text{tr}(P)\mathbf{1} - P
. \label{def-P}
\end{equation}
With this definition one may easily check the identity
\begin{equation}
K ({\boldsymbol{\mathsf{X}}}\times{\boldsymbol{\mathsf{Y}}})={\boldsymbol{\mathsf{X}}}\times
P{\boldsymbol{\mathsf{Y}}}-{\boldsymbol{\mathsf{Y}}}\times P {\boldsymbol{\mathsf{X}}}
\,. \label{K-P Lemma}
\end{equation}
By using this identity,  equations (\ref{XY-eqn-1}) - 
(\ref{XY-eqn-2}) can be written as

\begin{equation}
\partial_t{\boldsymbol{\mathsf{Y}}}(t,{s}) - \partial_{s}{\boldsymbol{\mathsf{X}}}(t,{s})
+{\boldsymbol{\mathsf{X}}}\times P{\boldsymbol{\mathsf{Y}}}-{\boldsymbol{\mathsf{Y}}}\times P{\boldsymbol{\mathsf{X}}}=0 \,, \label{XY-eqn-1a}
\end{equation}
\begin{equation}
\partial_{s}{\boldsymbol{\mathsf{Y}}}(t,{s}) - \partial_t{\boldsymbol{\mathsf{X}}}(t,{s})
- {\boldsymbol{\mathsf{Y}}}\times P {\boldsymbol{\mathsf{Y}}}+{\boldsymbol{\mathsf{X}}}\times P{\boldsymbol{\mathsf{X}}}=0 \,. \label{XY-eqn-2a}
\end{equation}

\subsection{Integrability of the $P$-Chiral model}
\label{P-Chiral}
Under the linear change of variables

\begin{equation}
\boldsymbol{\mathsf{u}}
=
\boldsymbol{\mathsf{X}} - \boldsymbol{\mathsf{Y}}
\quad\hbox{and}\quad
\boldsymbol{\mathsf{v}}
=
-\,\boldsymbol{\mathsf{X}} - \boldsymbol{\mathsf{Y}}
\label{changeXY-uv}
\end{equation}
equations (\ref{XY-eqn-1a}) and (\ref{XY-eqn-2a}) acquire the form of the following $O(3)$ $P$-Chiral model, \cite{BoYa1995}, \cite{Ya1988},
\cite{Ge2008}, 

\begin{equation}
\partial_t{\boldsymbol{\mathsf{u}}}(t,{s}) +\partial_{s}{\boldsymbol{\mathsf{u}}}(t,{s})
+{\boldsymbol{\mathsf{u}}}\times P{\boldsymbol{\mathsf{v}}}=0 \,, \label{uv-eqn-1}
\end{equation}

\begin{equation}
\partial_t{\boldsymbol{\mathsf{v}}}(t,{s}) - \partial_{s}{\boldsymbol{\mathsf{v}}}(t,{s})
+{\boldsymbol{\mathsf{v}}}\times P {\boldsymbol{\mathsf{u}}}=0 
\,, \label{uv-eqn-2}
\end{equation}
where the diagonal matrix $P$ is defined in terms of the $3\times3$ symmetric matrix $K$ in equation (\ref{def-P}).

The system (\ref{uv-eqn-1}) - (\ref{uv-eqn-2}) represents two \emph{cross-coupled} equations for
${\boldsymbol{\mathsf{u}}}$ and ${\boldsymbol{\mathsf{v}}}$. 
These equations preserve the magnitudes $|{\boldsymbol{\mathsf{u}}}|^2$ and $|{\boldsymbol{\mathsf{v}}}|^2$, so they allow the further
assumption that the vector fields $(\boldsymbol{\mathsf{u}},\boldsymbol{\mathsf{v}})$ take values on the product of unit spheres
$\mathbb{S}^2 \times \mathbb{S}^2 \subset  \mathbb{R}^3 \times\mathbb{R}^3$. The $P$-Chiral model is an
integrable system and its Lax pair \cite{Lax68} in terms of $(\boldsymbol{\mathsf{u}},\boldsymbol{\mathsf{v}})$ is given in
\cite{BoYa1995}. Expressing its Lax pair  in terms of $(\boldsymbol{\mathsf{X}},\boldsymbol{\mathsf{Y}})$ utilizes the following isomorphism between $\mathfrak{so}(3) \oplus \mathfrak{so}(3)$ and $\mathfrak{so}(4)$:
\begin{equation}
A({\boldsymbol{\mathsf{X}}},{\boldsymbol{\mathsf{Y}}})=\left(
\begin{matrix}
0 & X_3&-X_2 & Y_1\\
-X_3 & 0&X_1 & Y_2\\
X_2 & -X_1&0 & Y_3 \\
-Y_1 & -Y_2&-Y_3 & 0
\end{matrix}
\right) . \label{Mso4-def}
\end{equation}

The system (\ref{XY-eqn-1}) - (\ref{XY-eqn-2}) can be recovered
as a compatibility condition of the operators
\begin{eqnarray}
L&=&\partial_{s}-A({\boldsymbol{\mathsf{Y}}},{\boldsymbol{\mathsf{X}}})(\lambda\,{\rm Id}+J),\\
M&=&\partial_t-A({\boldsymbol{\mathsf{X}}},{\boldsymbol{\mathsf{Y}}})(\lambda\,{\rm Id}+J),
\label{L-Mpair}
\end{eqnarray}
where the diagonal matrix $J$ is defined by
\begin{equation}
J = -\frac{1}{2}\text{diag}(-K_1+K_2+K_3,K_1-K_2+K_3,K_1+K_2-K_3,K_1+K_2+K_3).
\label{J-def}
\end{equation}
These steps get us to the $O(3)$ $P$-chiral model and explain its derivation as an Euler-Poincar\'e equation and thus as a Lie-Poisson system. Previous derivations had identified a linear Poisson structure by other methods, mainly based on Lax pairs and Maurer-Cartan, or zero-curvature relations.  Now that we have taken these steps, we can build a chiral-type model as an Euler-Poincar\'e equation with quadratic kinetic and potential energy on any Lie group. 

\begin{remark}
If ${\sf K}={\rm Id}$, equations (\ref{XY-eqn-1}) - (\ref{XY-eqn-2}) recover the $O(3)$ chiral model.
\end{remark}


\section{The $Sp(2)$ strand}\label{SP2-sec}

\subsection{Bloch-Iserles (BI) equation}
\label{BlochIserles}
The Bloch-Iserles equation is the  $n\times n$ matrix differential system \cite{BlIs2006}
\begin{equation}
\dot{X}(t) = [N, X^2]
,
\label{BI-eqn1}
\end{equation}
where the skew-symmetric matrix $N\in \mathfrak{so}(n)$ is given and the solution $X(t)\in {\rm Sym}(n)$ is a symmetric matrix. 
The BI equation is expressible equivalently as
\begin{equation}
\dot{X}(t) = [B(X), X]
\,,\quad\hbox{with}\quad
B(X) = NX + XN
\,,
\label{BI-eqn2}
\end{equation}
where $B(X): {\rm Sym}(n) \to \mathfrak{so}(n)$. Consequently, the solution of (\ref{BI-eqn2}) for $X(t)$ is given by the similarity transformation
\begin{equation}
{X}(t) = Q(t)X(0)Q^{-1}(t)
\,,\quad\hbox{with}\quad
\dot{Q}Q^{-1}(t) = B(X(t))\in \mathfrak{so}(n)
\,.
\label{BI-eqn3}
\end{equation}
This form of the solution reveals that the BI equation is isospectral, i.e., the eigenvalues of the symmetric matrix $X(t)$ are preserved. Isospectrality leads to constants of motion, obtained from the commutator form of the BI equation with $B(X) = NX + XN$,
\begin{equation}
\dot{X}(t) = [ NX + XN + \lambda N^2, X + \lambda N]
\,,\quad\hbox{with}\quad
\lambda \in \mathbb{R}
\,.
\label{BI-eqn4}
\end{equation}
Consequently, the matrix invariants of $X(t)$ are conserved,
\begin{equation}
{\rm tr}(X + \lambda N)^k = {\rm const}
\,,\quad\hbox{for}\quad
k = 1,2,\dots,n-1
\,.
\label{BI-eqn5}
\end{equation}
For more information about the further analysis of the $n$-component BI equation in general form, see \cite{BlIs2006} and references therein. In what follows, we introduce the simplest case of the BI equation, for the symplectic group $Sp(2)$, then we treat its corresponding $G$-strand extension. 

\subsection{The simplest BI equation}
The simplest nontrivial form of the BI equation may be derived as a particular Euler-Poincar\'e equation on the symplectic group $Sp(2)$.
Euler-Poincar\'e equations on the symplectic group $Sp(2n)$ have been discussed earlier in \cite{GiHoTr2008,GBTr2011} and follow a familiar pattern.

Let the set of $2\times2$ matrices $M_i$ with $i=1,2,3$ satisfy the defining relation for the symplectic Lie group $Sp(2)$,
\begin{equation}
M_iJM_i^T=J
\quad\hbox{with}\quad
J=\left(
\begin{matrix}
0 & -1 \\
1 & 0
\end{matrix}
\right)
\quad\hbox{and no sum on index }i=1,2,3.
\label{sp2group-def}
\end{equation}
The corresponding elements of its Lie algebra $m_i=\dot{M}_iM_i^{-1}\in \mathfrak{sp}(2)$  satisfy $(Jm_i)^T=Jm_i$ for each $i=1,2,3$. Thus, ${\sf X}_i=Jm_i$ satisfying ${\sf X}_i^T={\sf X}_i$ comprises a set of three symmetric $2\times2$ matrices. For definiteness, we may choose a basis given by
\begin{equation}
{\sf X}_1=Jm_1=
\left(
\begin{matrix}
2 & 0 \\
0 & 0
\end{matrix}
\right)
,\qquad
{\sf X}_2=Jm_2=
\left(
\begin{matrix}
0 & 0 \\
0 & 2
\end{matrix}
\right)
,\qquad
{\sf X}_3=Jm_3=
\left(
\begin{matrix}
0 & 1 \\
1 & 0
\end{matrix}
\right).
\label{sp2-basis}
\end{equation}
This basis corresponds to the vector of momentum maps given by quadratic phase-space functions $\mathbf{X}=(|\mathbf{q}|^2,|\mathbf{p}|^2,\mathbf{q}\cdot\mathbf{p})^T$ that are often used in geometric optics and in the theory of particle beam design. That is, for $\mathbf{z}=(\mathbf{q},\mathbf{p})^T$ one identifies
\begin{equation}
\tfrac12\mathbf{z}^T{\sf X}_1\mathbf{z}
=|\mathbf{q}|^2
=X_1
,\qquad
\tfrac12\mathbf{z}^T{\sf X}_2\mathbf{z}
=|\mathbf{p}|^2
=X_2
,\qquad
\tfrac12\mathbf{z}^T{\sf X}_3\mathbf{z}
=\mathbf{q}\cdot\mathbf{p}
=X_3
\,.
\label{sp2-psmomaps}
\end{equation}

\begin{lemma} \label{lemma-JB}
For ${\sf X}=Jm$, ${\sf Y}=Jn\in sym(2)$ with $m,n\in \mathfrak{sp}(2)$,
the J-bracket
\begin{equation}
[{\sf X,Y}]_J
:= {\sf X}J{\sf Y}-{\sf Y}J{\sf X}
=: 2{\rm sym}({\sf X}J{\sf Y})
=: {\ad}^J_{\sf X}{\sf Y}
\label{J-bracket}
\end{equation}
satisfies the Jacobi identity.
\end{lemma}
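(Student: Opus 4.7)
The plan is to reduce the Jacobi identity for $[\cdot,\cdot]_J$ on $sym(2)$ to the standard Jacobi identity on $\mathfrak{sp}(2)$ (with the ordinary matrix commutator) via the linear map $\phi:\mathfrak{sp}(2)\to sym(2)$ defined by $\phi(m)=Jm$. The defining relation $m^TJ+Jm=0$ characterising $\mathfrak{sp}(2)$ is exactly the symmetry condition $(Jm)^T=Jm$, so $\phi$ is a well-defined linear bijection onto $sym(2)$, with inverse ${\sf X}\mapsto -J{\sf X}$ (using $J^2=-I$).

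The core computation is to check how $\phi$ intertwines the two brackets. With ${\sf X}=Jm$, ${\sf Y}=Jn$ and $J^2=-I$,
\begin{equation*}
[{\sf X},{\sf Y}]_J \;=\; (Jm)J(Jn) - (Jn)J(Jm) \;=\; -J(mn-nm) \;=\; -\phi\bigl([m,n]\bigr),
\end{equation*}
so $\phi$ is a Lie algebra anti-homomorphism: each bracket evaluation contributes exactly one minus sign.

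With this intertwining in hand, the Jacobi identity transfers automatically. Starting from the matrix commutator identity $[[l,m],n]+[[m,n],l]+[[n,l],m]=0$ on $\mathfrak{sp}(2)$ and applying $\phi$ term by term, each cyclic term is the image of a doubly nested commutator and therefore acquires a factor $(-1)^2=1$. Hence the image equation is precisely
\begin{equation*}
[[\phi(l),\phi(m)]_J,\phi(n)]_J + [[\phi(m),\phi(n)]_J,\phi(l)]_J + [[\phi(n),\phi(l)]_J,\phi(m)]_J \;=\; 0,
\end{equation*}
and because $\phi$ is surjective onto $sym(2)$ this is the Jacobi identity for $[\cdot,\cdot]_J$ on $sym(2)$.

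The only place that needs genuine care is the sign bookkeeping in the intertwining step, and the reason it works is structural: every term in the Jacobi identity carries exactly two bracket evaluations, so the two minus signs contributed by $\phi$ cancel. A brute-force alternative would be to expand $[[{\sf X},{\sf Y}]_J,{\sf Z}]_J$ together with its two cyclic permutations and verify that the resulting eight triple products in $\{{\sf X},{\sf Y},{\sf Z}\}$ and $J$ cancel pairwise, but the isomorphism argument is cleaner and makes transparent that $(sym(2),[\cdot,\cdot]_J)$ is nothing but $\mathfrak{sp}(2)$ with its standard commutator in disguise.
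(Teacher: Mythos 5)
Your proof is correct and follows essentially the same route as the paper: the key computation $[{\sf X},{\sf Y}]_J=(Jm)J(Jn)-(Jn)J(Jm)=-J[m,n]$ using $J^2=-{\rm Id}$ is exactly the paper's identity (\ref{JB-proof}), after which the Jacobi identity is transferred from $\mathfrak{sp}(2)$ by linearity. Your explicit bookkeeping with the anti-homomorphism $\phi(m)=Jm$ and the cancellation of the two minus signs simply spells out the step the paper leaves to the reader.
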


\begin{proof}
By a straightforward calculation using $J^2=-{\rm Id}_{2\times2}$ with the definitions of ${\sf X}$ and ${\sf Y}$, one finds
\begin{equation}
[{\sf X,Y}]_J:={\sf X}J{\sf Y}-{\sf Y}J{\sf X}=-J(mn-nm)=-J[m,n]
\,.
\label{JB-proof}
\end{equation}
The lemma then  follows from the Jacobi identity for the symplectic Lie algebra and linearity in the definitions of ${\sf X},{\sf Y}\in sym(2)$ in terms of $m,n\in \mathfrak{sp}(2)$.
\end{proof}

\subsection{The BI $Sp(2)$ strand}
\begin{lemma} \label{lemma-xder}
Let ${\sf X}=JM_tM^{-1}$ with time derivative $M_t=\partial_t M(t,{s})$ and let ${\sf Y}=JM' M^{-1}$ with derivative $M'=\partial_s M(t,{s})$. Then
\begin{equation}
 {\sf X}' = \dot{{\sf Y}} + [{\sf X},{\sf Y}]_J
\,,
\label{xder-rel}
\end{equation}
for the J-bracket defined in equation (\ref{J-bracket}).
\end{lemma}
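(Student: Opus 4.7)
The approach is to derive the identity as the right-trivialized analogue of the Maurer--Cartan zero-curvature relation, then convert between the commutator bracket on $\mathfrak{sp}(2)$ and the $J$-bracket on $\mathrm{sym}(2)$ using Lemma \ref{lemma-JB}.

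First I would set $m := M_t M^{-1}$ and $n := M' M^{-1}$, so that $m, n \in \mathfrak{sp}(2)$ and ${\sf X} = Jm$, ${\sf Y} = Jn$. From $M_t = mM$ and $M' = nM$, the equality of cross partial derivatives $\partial_s M_t = \partial_t M'$ yields
\begin{equation*}
m'M + mM' = n_t M + n M_t
\quad\Longrightarrow\quad
(m' - n_t)M = (nm - mn)M,
\end{equation*}
and right-multiplying by $M^{-1}$ gives the right-trivialized zero-curvature relation $m' - n_t = -[m,n]$, or equivalently $m' = n_t - [m,n]$. This is the right-invariant counterpart of equation (\ref{zero-curv}); the sign differs from the left-invariant case in Lemma 1.1 precisely because of the reversed order in $M_tM^{-1}$ versus $M^{-1}M_t$.

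Next I would apply $J$ on the left to turn this into a relation among the symmetric matrices ${\sf X}$ and ${\sf Y}$. Since $J$ does not depend on $t$ or $s$, multiplication commutes with differentiation, giving
\begin{equation*}
{\sf X}' = Jm' = Jn_t - J[m,n] = \dot{{\sf Y}} - J[m,n].
\end{equation*}
The key identity is now supplied by Lemma \ref{lemma-JB}: its proof established (see equation (\ref{JB-proof})) that
\begin{equation*}
[{\sf X},{\sf Y}]_J = {\sf X}J{\sf Y} - {\sf Y}J{\sf X} = -J[m,n].
\end{equation*}
Substituting this into the previous display produces exactly the claimed relation ${\sf X}' = \dot{{\sf Y}} + [{\sf X},{\sf Y}]_J$.

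There is no serious obstacle, since the computation reduces to the standard Maurer--Cartan argument once the conjugation by $J$ is correctly accounted for. The only place where care is required is sign-tracking: one must verify that the sign produced by the right-trivialization $m = M_tM^{-1}$ matches the sign produced by Lemma \ref{lemma-JB} converting $[m,n]$ into $[{\sf X},{\sf Y}]_J$. Both contribute a factor of $-J[m,n]$, and these combine to give the correct $+[{\sf X},{\sf Y}]_J$ on the right-hand side.
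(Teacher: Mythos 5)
Your proof is correct and follows essentially the same route as the paper: equality of cross derivatives gives the right-trivialized zero-curvature relation $m' - \dot{n} = -[m,n]$, and left-multiplication by $J$ together with the identity $[{\sf X},{\sf Y}]_J = -J[m,n]$ from the proof of Lemma \ref{lemma-JB} yields (\ref{xder-rel}). Your explicit sign-tracking of the right- versus left-trivialization is a welcome clarification of a step the paper leaves implicit.
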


\begin{proof}
Let $m=M_tM^{-1}$ and $n=M'M^{-1}$. Then equality of cross derivatives implies the relation
\[
m'  - \dot{n} = nm-mn =: - [m,n]
\,.
\]
Hence, upon substituting the definitions of the symmetric matrices ${\sf X}$ and ${\sf Y}$, and using $J^T=-J$, one finds
\begin{eqnarray*}
{\sf X}' = Jm'
&=&
J\dot{n} - J[m,n]
\\
&=&
\dot{{\sf Y}} + [{\sf X},{\sf Y}]_J
=
\dot{{\sf Y}} + {\sf X}J{\sf Y}-{\sf Y}J{\sf X}
\\
&=&
\dot{{\sf Y}} + 2{\rm sym}({\sf X}J{\sf Y})
=:
\dot{{\sf Y}} + {\ad}^J_{\sf X}{\sf Y}
.\end{eqnarray*}
\end{proof}
\noindent
Equation (\ref{xder-rel}) provides the dynamics of the quantity ${\sf Y}({s},t)$.
Lemma \ref{lemma-JB} and Lemma \ref{lemma-xder} now allow us to prove the following theorem for the derivation of the EP equation for $Sp(2)$.
\begin{theorem}
Hamilton's principle $\delta S=0$ for
$
S=\int_a^b \ell({\mathsf{X}},{\mathsf{Y}})\,ds\,dt
$
implies the EP equation for the symplectic Lie algebra $\mathfrak{sp}(2)$,
\begin{equation}
\frac{\partial}{\partial t}\frac{\delta\ell}{\delta {\mathsf{X}}}
+ 2 {\rm sym}\Big(J{\sf X}\frac{\partial\ell}{\partial {\sf X}}\Big)
 + \frac{\partial}{\partial {s}} \frac{\delta\ell}{\delta {\mathsf{Y}}}
+2 {\rm sym}\Big(J{\sf Y}\frac{\partial\ell}{\partial {\sf Y}}\Big)
=0
\,.
\label{EPsymp}
\end{equation}
\end{theorem}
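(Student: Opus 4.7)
The plan is to mimic the proof of the general Euler--Poincar\'e theorem (items \textbf{i}--\textbf{iv} of Theorem \ref{lall}) in the $Sp(2)$ setting, with the Lie bracket replaced throughout by the $J$-bracket of Lemma \ref{lemma-JB} and with the trace pairing $\langle A,B\rangle=\mathrm{tr}(AB)$ on $\mathrm{sym}(2)$ playing the role of the duality pairing. First I would derive the constrained variational formulas
\[
\delta {\sf X} = \dot{{\sf Z}} + \mathrm{ad}^{J}_{{\sf X}}{\sf Z}, \qquad
\delta {\sf Y} = {\sf Z}\,' + \mathrm{ad}^{J}_{{\sf Y}}{\sf Z},
\]
for an arbitrary symmetric matrix variation ${\sf Z} = J\,\delta M\,M^{-1}\in \mathrm{sym}(2)$ vanishing at the $(t,s)$ endpoints. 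This is proved by the same cross-derivative argument used in Lemma \ref{lemma-xder}: one writes $\delta M = Mz$ with $z\in\mathfrak{sp}(2)$, computes $\delta(M_t M^{-1})$ and $(\delta M\,M^{-1})_t$ directly, subtracts to isolate the commutator $-[m,z]$, then multiplies by $J$ and uses the identity $J[m,z]=-\mathrm{ad}^{J}_{{\sf X}}{\sf Z}$ from the proof of Lemma \ref{lemma-JB}.

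Next I would substitute these variations into $\delta S = \int\!\!\int (\langle \delta\ell/\delta{\sf X},\delta{\sf X}\rangle+\langle \delta\ell/\delta{\sf Y},\delta{\sf Y}\rangle)\,ds\,dt$ and integrate by parts in $t$ and $s$, using that ${\sf Z}$ vanishes at the endpoints. Arbitrariness of ${\sf Z}$ then produces the abstract form
\[
\frac{\partial}{\partial t}\frac{\delta\ell}{\delta{\sf X}} - \mathrm{ad}^{J*}_{{\sf X}}\frac{\delta\ell}{\delta{\sf X}} + \frac{\partial}{\partial s}\frac{\delta\ell}{\delta{\sf Y}} - \mathrm{ad}^{J*}_{{\sf Y}}\frac{\delta\ell}{\delta{\sf Y}} = 0,
\]
which is the $J$-bracket analogue of (\ref{EPst-eqn}). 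The step of \emph{arbitrariness of ${\sf Z}$} relies on the fact that $\mathrm{sym}(2)$ is self-dual under the trace pairing, so that an identity $\mathrm{tr}(W{\sf Z})=0$ for all symmetric ${\sf Z}$ forces $\mathrm{sym}(W)=0$.

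The main technical step is converting $\mathrm{ad}^{J*}_{{\sf X}}$ into the explicit form $-2\,\mathrm{sym}(J{\sf X}\,\cdot)$ appearing in the theorem. Setting $P=\delta\ell/\delta{\sf X}\in\mathrm{sym}(2)$, one computes
\[
\langle P,\mathrm{ad}^{J}_{{\sf X}}{\sf Z}\rangle = \mathrm{tr}\bigl(P({\sf X}J{\sf Z}-{\sf Z}J{\sf X})\bigr) = \mathrm{tr}\bigl((P{\sf X}J - J{\sf X}P){\sf Z}\bigr),
\]
using cyclicity of trace. The matrix $P{\sf X}J-J{\sf X}P$ is already symmetric (verify via $J^T=-J$, $P^T=P$, ${\sf X}^T={\sf X}$), so it equals $2\,\mathrm{sym}(P{\sf X}J)=-2\,\mathrm{sym}(J{\sf X}P)$. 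Hence $\mathrm{ad}^{J*}_{{\sf X}}P = -2\,\mathrm{sym}(J{\sf X}P)$, and the abstract EP equation above becomes (\ref{EPsymp}). The only thing that could go wrong, and so the part requiring care, is the bookkeeping of transposes and signs in the coadjoint calculation; once that identity is established, the theorem follows immediately.
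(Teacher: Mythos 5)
Your proposal is correct and takes essentially the same route as the paper's proof: insert the constrained variations $\delta{\sf X}=\dot{\sf Z}+{\rm ad}^J_{\sf X}{\sf Z}$, $\delta{\sf Y}={\sf Z}'+{\rm ad}^J_{\sf Y}{\sf Z}$ into $\delta S$, integrate by parts in $t$ and $s$, and identify ${\rm ad}^{J*}_{\sf X}\frac{\delta\ell}{\delta{\sf X}}=-2\,{\rm sym}\big(J{\sf X}\frac{\delta\ell}{\delta{\sf X}}\big)$ through the trace pairing on ${\rm sym}(2)$, exactly as the paper does, with your explicit derivation of the variation formulas and the symmetry check on $P{\sf X}J-J{\sf X}P$ merely filling in steps the paper asserts. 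One small slip: since you set ${\sf Z}=J\,\delta M\,M^{-1}$ (right-trivialized, matching $m=M_tM^{-1}$), you should write $\delta M=zM$ with $z=\delta M\,M^{-1}$ rather than $\delta M=Mz$; the rest of your cross-derivative computation already uses the right-trivialized $z$, so the signs come out as claimed.
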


\begin{proof}
By a direct calculation, integrating by parts, substituting and rearranging yields
\begin{eqnarray*}
\delta S
\!\!\!&=&\!\!\!
\int_a^b
\Big\langle \frac{\delta\ell}{\delta {\mathsf{X}}}
\,,\,\delta {\mathsf{X}} \Big\rangle
+
\Big\langle \frac{\delta\ell}{\delta {\mathsf{Y}}}
\,,\,\delta {\mathsf{Y}} \Big\rangle
\,ds\,dt
\\
\!\!\!&=&\!\!\!
\int_a^b
\Big\langle \frac{\delta\ell}{\delta {\mathsf{X}}}
\,,\,\partial_t {\mathsf{Z}}
+ {\rm ad}^J_{\mathsf{X}}{\mathsf{Z}}\Big\rangle
+
\Big\langle \frac{\delta\ell}{\delta {\mathsf{Y}}}
\,,\,\partial_{s} {\mathsf{Z}}
+ {\rm ad}^J_{\mathsf{Y}}{\mathsf{Z}} \Big\rangle
\,ds\,dt
\\
\!\!\!&=&\!\!\!
\int_a^b\!
\Big\langle \!-\partial_t \frac{\delta\ell}{\delta {\mathsf{X}}}
+ {\rm ad}^{J*}_{\mathsf{X}}\frac{\delta\ell}{\delta {\mathsf{X}}}
\,,\,{\mathsf{Z}}\Big\rangle
+
\Big\langle\! -\partial_{s} \frac{\delta\ell}{\delta {\mathsf{Y}}}
+ {\rm ad}^{J*}_{\mathsf{Y}}\frac{\delta\ell}{\delta {\mathsf{Y}}}
\,,\,{\mathsf{Z}} \Big\rangle
\,ds\,dt
\\
\!\!\!&=&\!\!\!
\int_a^b
\Big\langle -\frac{\partial}{\partial t}\frac{\delta\ell}{\delta {\mathsf{X}}}
+ {\rm ad}^{J*}_{\mathsf{X}}\frac{\delta\ell}{\delta {\mathsf{X}}}
- \frac{\partial}{\partial {s}} \frac{\delta\ell}{\delta {\mathsf{Y}}}
+ {\rm ad}^{J*}_{\mathsf{Y}}\frac{\delta\ell}{\delta {\mathsf{Y}}}
\,,\,{\mathsf{Z}} \Big\rangle
\,ds\,dt
\\
\!\!\!&=&\!\!\!
- \int_a^b
\Big\langle \frac{\partial}{\partial t}\frac{\delta\ell}{\delta {\mathsf{X}}}
+ 2 {\rm sym}\Big(J{\sf X}\frac{\partial\ell}{\partial {\sf X}}\Big)
 + \frac{\partial}{\partial {s}} \frac{\delta\ell}{\delta {\mathsf{Y}}}
+2 {\rm sym}\Big(J{\sf Y}\frac{\partial\ell}{\partial {\sf Y}}\Big)
\,,\,{\mathsf{Z}} \Big\rangle
\,ds\,dt
\,.
\end{eqnarray*}
Here the formulas for the variations $\delta {\mathsf{X}}$ and $\delta {\mathsf{Y}}$ are obtained from the usual equality of cross derivatives. One defines the $J$-coadjoint action $(\ad^{J*})$ as the dual of the $J$-adjoint action $(\ad^J)$ with respect to the trace pairing $\langle\,\cdot\,,\,\cdot\,\rangle$ of symmetric matrices by
\begin{equation}
\Big\langle
{\rm ad}^{J*}_{\mathsf{X}}\frac{\delta\ell}{\delta {\mathsf{X}}}
\,,\,{\mathsf{Z}} \Big\rangle
=
\Big\langle \frac{\delta\ell}{\delta {\mathsf{X}}}
\,,\,{\rm ad}^J_{\mathsf{X}}{\mathsf{Z}} \Big\rangle
=
\Big\langle \frac{\delta\ell}{\delta {\mathsf{X}}}
\,,\,2{\rm sym}({\sf X}J{\sf Z})\Big\rangle
=
\Big\langle - 2 {\rm sym}\Big(J{\sf X}\frac{\partial\ell}{\partial {\sf X}}\Big)
\,,\,{\sf Z}\Big\rangle
\,.
\label{ad-star-def}
\end{equation}
\end{proof}

\begin{remark}
Consider the evolution equation (\ref{EPsymp}) in the case that
\begin{equation}
\ell({\sf X}, {\sf Y})
= \frac{1}{2} \int {\rm tr}\big({\sf X}^2\big) - {\rm tr}\big({\sf Y}^2\big)\,d{s}
\,,
\label{sp-chainEP}
\end{equation}
where ${\rm tr}$ denotes trace of a matrix. (This is the motion equation for the $Sp(2)$ chain with respect to the trace pairing of symmetric matrices.)

For this Lagrangian, we have $\partial\ell/\partial {\sf X}={\sf X}$ and $\partial\ell/\partial {\sf Y}=-\,{\sf Y}$, so the Euler-Poincar\'e equation (\ref{EPsymp}) becomes
\begin{equation}
\dot{{\sf X}}
+ 2 {\rm sym}\left(J{\sf X}^2\right)
= {\sf Y}'
+ 2 {\rm sym}\left(J{\sf Y}^2\right)
,
\label{EPsymp1}
\end{equation}
or, equivalently, in terms of the ordinary commutator
\begin{equation}
\dot{{\sf X}}
+ [{\sf X}, {\sf X}J + J {\sf X} ]
=  {\sf Y}'
+ [{\sf Y}, {\sf Y}J + J {\sf Y} ]
\,.
\label{BIchain}
\end{equation}
\end{remark}

\begin{remark}
When ${\sf Y}$ is absent, equation (\ref{BIchain}) for the BI strand recovers the BI ordinary differential equation (\ref{BI-eqn2}) for the case that $N=J$. 
\end{remark}

\begin{remark}
Either of the equivalent forms (\ref{EPsymp1}) or (\ref{BIchain}) of the EP(symp) motion equation must be closed by using the evolution equation for ${\sf Y}$, obtained in equation (\ref{xder-rel}) from the compatibility of cross derivatives. Namely,
\begin{equation}
\dot{{\sf Y}} =  {\sf X}'  - 2 {\rm sym}\left({\sf X}J{\sf Y}\right)
,
\label{compat-eqn1}
\end{equation}
or, equivalently,
\begin{equation}
\dot{{\sf Y}} =  {\sf X}'  - ({\sf X}J{\sf Y} - {\sf Y}J{\sf X})
\,.
\label{compat-eqn2}
\end{equation}

Equations (\ref{BIchain}) and (\ref{compat-eqn2}) may be compared against corresponding equations  (\ref{XY-eqn-2}) and (\ref{XY-eqn-1}), respectively.
In this comparison, these systems seem to be dual to each other, as symmetric is dual to antisymmetric.
\end{remark}

\subsubsection{Lie-Poisson Hamiltonian form of $Sp(2)$ strand dynamics}
The Hamiltonian form of the Euler-Poincar\'e equation (\ref{BIchain}) and compatibility equation (\ref{xder-rel}) with its associated Lie-Poisson bracket may be found as before from the Legendre transform
\[
 {\sf P} = \frac{\partial\ell}{\partial {\sf X}}
\quad\hbox{and}\quad
{\sf X} = \frac{\partial  h}{\partial {\sf P}}
\quad\hbox{with}\quad
h({\sf P},{\sf Y})={\rm tr}({\sf P} {\sf X}) - \ell({\sf X},{\sf Y})
\,.
\]
The Lie-Poisson bracket may also be obtained in the usual way as
\begin{equation} \label{LP-Ham-struct-sp2}
\frac{\partial}{\partial t}
    \begin{bmatrix}
    {\mathsf{P}}
    \\
    {\mathsf{Y}}
    \end{bmatrix}
=
\begin{bmatrix}
    {\rm ad}^{J*}_{\Box}{\mathsf{P}}
   &    
       \partial_{s}
   - 
   {\rm ad}^{J*}_{\mathsf{Y}}
      \\
    \partial_{s}
   +
   {\rm ad}^{J}_{\mathsf{Y}}
& 0
    \end{bmatrix}
    \begin{bmatrix}
   \delta h/\delta{\mathsf{P}} \\
   \delta h/\delta{\mathsf{Y}}
    \end{bmatrix}.
\end{equation}
This is the $J$-bracket version of the Lie-Poisson Hamiltonian structure in (\ref{LP-Ham-struct-symbol}); so 
its Jacobi identity will follow from that of the J-bracket discussed earlier.

Here, as in (\ref{ad-star-def}), one defines the $J$-coadjoint action $(\ad^{J*})$ as the dual of the $J$-adjoint action $(\ad^J)$ with respect to the trace pairing $\langle\,\cdot\,,\,\cdot\,\rangle$ of symmetric matrices. Explicitly, this is 
\begin{equation}
\Big\langle
{\rm ad}^{J*}_{{\partial  h}/{\partial {\sf P}}}{\mathsf{P}}
\,,\,{\mathsf{Z}} \Big\rangle
=
\Big\langle {\mathsf{P}}
\,,\,{\rm ad}^J_ {{\partial  h}/{\partial {\sf P}}}{\mathsf{Z}} \Big\rangle
=
\Big\langle {\mathsf{P}}
\,,\,2{\rm sym}\Big( \frac{\partial  h}{\partial {\sf P}}J{\sf Z}\Big)\Big\rangle
=
\Big\langle - 2 {\rm sym}\Big(J \frac{\partial  h}{\partial {\sf P}}{\mathsf{P}}\Big)
\,,\,{\sf Z}\Big\rangle
\,.
\label{ad-star-def}
\end{equation}

\begin{remark}
Legendre transforming the Lagrangian (\ref{sp-chainEP}) that is a difference of squares  leads to a Hamiltonian that is a sum of squares
\begin{equation}
h({\sf P}, {\sf Y})
= \frac{1}{2} \int {\rm tr}({\sf P}^2) + {\rm tr}({\sf Y}^2)\,d{s}
\,,
\label{sp-chainHam}
\end{equation}
where, as before, ${\rm tr}$ denotes the trace of a matrix. 
\end{remark}

\subsubsection{The isomorphism between $Sp(2)$ and $SL(2)$ strand equations}
The Lie algebras $\mathfrak{sp}(2)$ and $\mathfrak{sl}(2)$ are isomorphic.  The $G$-strand equations in terms of the variables ${\sf n}\in \mathfrak{sl}(2)$ and ${\sf m}\in \mathfrak{sl}(2)$ are
\begin{eqnarray}
\dot{{\sf m}}-{\sf n}'-[{\sf m}^{T},{\sf m}]+[{\sf n}^T,{\sf n}]=0,\qquad \text{(EP equation}) 
\label{sl1}
\\
\dot{{\sf n}}-{\sf m}'+[{\sf n},{\sf m}]=0 \qquad \text{(compatibility)}, 
\label{sl2}
\end{eqnarray}
in which $J {\sf m}$ and $J {\sf n}$ are symmetric. The first of these is the $ \mathfrak{sl}(2)$ EP equation with Lagrangian
\begin{equation}
\mathcal{L}=\frac{1}{2}\int\text{tr} ({\sf n}^T {\sf n} + {\sf m}^T {\sf m}) d{s} \,.
\label{XY-eqn-1-so3-tld}
\end{equation}
Equations (\ref{sl1}) and (\ref{sl2}) are related to the previously considered $K$-chiral model (that is, before specifying how $K$ relates to $P$) with 
\begin{eqnarray}
{\sf K}
=
\left(
\begin{array}{ccc}
0 & 1 & 0 \\
1 & 0 & 0 \\
0 & 0 & 1
\end{array}
\right),
\label{K-nondiag}
\end{eqnarray} 
where however this ${\sf K}$ is not diagonal. 
The relation is based on the following linear invertible map between vectors $\boldsymbol{\mathsf{X}}\in\mathbb{R}^3$ and $\mathfrak{sl}(2)$ matrices:
\begin{eqnarray}
{\sf A}(\boldsymbol{\mathsf{X}})
=
\left(
\begin{array}{ccc}
\frac{X_3}{2} & \frac{X1}{\sqrt{2}}  \\
 \frac{X_2}{\sqrt{2}} & -\frac{X_3}{2}  
\end{array}
\right) \in \mathfrak{sl}(2).
\label{Arep}
\end{eqnarray} 
Consequently,  one finds 
\begin{eqnarray*}
[{\sf A}(\boldsymbol{\mathsf{X}}),{\sf A}(\boldsymbol{\mathsf{Y}})]&=&
{\sf A}(\mathsf{K}\cdot\boldsymbol{\mathsf{X\times Y}})
\\
{\sf A}({\sf K}\cdot \boldsymbol{\mathsf{X}})&=&{\sf A}^T(\boldsymbol{\mathsf{X}}) 
\end{eqnarray*}and
\begin{eqnarray*}
[{\sf A}(\boldsymbol{\mathsf{X}}),{\sf A}(\boldsymbol{\mathsf{Y}})]
&=&
{\sf A}^T(\boldsymbol{\mathsf{X\times Y}})\\
{\sf A}(\boldsymbol{\mathsf{X\times Y}})&=&([{\sf A}(\boldsymbol{\mathsf{X}}),{\sf A}(\boldsymbol{\mathsf{Y}})])^T
\,.\end{eqnarray*} 
Thus, the constraint equation for the $K$-model (\ref{XY-eqn-1}) in the $\mathfrak{sl}(2)$ representation becomes
\begin{eqnarray}
\dot{{\sf A}}(\boldsymbol{\mathsf{Y}})-{\sf A}'(\boldsymbol{\mathsf{X}})=-{\sf A}(\mathsf{K}\cdot\boldsymbol{\mathsf{X\times Y}})=-[{\sf A}(\boldsymbol{\mathsf{X}}),{\sf A}(\boldsymbol{\mathsf{Y}})]. 
\end{eqnarray} 
Likewise, the EP equation for the $K$-model (\ref{XY-eqn-2}) becomes 
\begin{eqnarray}
{\sf A}'(\boldsymbol{\mathsf{Y}})-\dot{{\sf A}}(\boldsymbol{\mathsf{X}})
&=& -{\sf A}(\boldsymbol{\mathsf{Y}}\times {\sf K}\cdot \boldsymbol{\mathsf{Y}})+A(\boldsymbol{\mathsf{X}}\times {\sf K}\cdot \boldsymbol{\mathsf{X}})
\nonumber\\
&=&
-[{\sf A}(\boldsymbol{\mathsf{Y}}),{\sf A}({\sf K}\boldsymbol{\mathsf{Y}})]^T + [{\sf A}(\boldsymbol{\mathsf{X}}),{\sf A}({\sf K} \boldsymbol{\mathsf{X}})]^T 
\nonumber\\
&=&
-[{\sf A}(\boldsymbol{\mathsf{Y}}),{\sf A}^T(\boldsymbol{\mathsf{Y}})]^T + [{\sf A}(\boldsymbol{\mathsf{X}}),{\sf A}^T(\boldsymbol{\mathsf{X}})]^T 
\nonumber\\
&=&
-[{\sf A}(\boldsymbol{\mathsf{Y}}),{\sf A}^T(\boldsymbol{\mathsf{Y}})] + [{\sf A}(\boldsymbol{\mathsf{X}}),{\sf A}^T(\boldsymbol{\mathsf{X}})] 
\,.
\end{eqnarray} 

Now upon identifying ${\sf m}=\mathsf{A}({\boldsymbol{\mathsf{X}}})$ and ${\sf n}=\mathsf{A}({\boldsymbol{\mathsf{Y}}})$ one finds that the $\mathfrak{sl}(2)$ model equations (\ref{sl1})--(\ref{sl2}) above are also integrable, because they are isomorphic to (\ref{XY-eqn-1}) - (\ref{XY-eqn-2}).

Furthermore, one can make use of the following property for the action of
${\sf K}$ given in (\ref{K-nondiag}): the vector $ {\sf K}\cdot \boldsymbol{\mathsf{X}}\in \mathbb{R}^3$ corresponds to the matrix $-{\sf K}{\sf X}{\sf K} \in \mathfrak{so}(3)$,
where as usual ${\sf X}$ is the $\mathfrak{so}(3)$ representation of the vector $\boldsymbol{\mathsf{X}}$.
The equations of the $\mathfrak{sp}(2)$ model according to the above $\mathfrak{sl}(2)$ identification can be written in vector form (\ref{XY-eqn-1}) - (\ref{XY-eqn-2}). In the $\mathfrak{so}(3)$ representation, this becomes
\begin{equation}
\dot{{\sf X}}
+ [{\sf K}{\sf X}{\sf K}, {\sf X}]
=  {\sf Y}'
+ [{\sf K}{\sf Y}{\sf K}, {\sf Y}]
\qquad\hbox{and}\qquad
\dot{{\sf Y}} =  {\sf X}'  - {\sf K}[{\sf Y},{\sf X}]{\sf K}
\,.
\label{XY-eqn-1-so3}
\end{equation}
The symmetric matrix ${\sf K}$ may be written as ${\sf K}={\sf O}{\sf \tilde{K}}{\sf O^T}$, with diagonal matrix ${\sf \tilde{K}}=\text{diag}(-1,1,1)$ and orthogonal matrix ${\sf O}$ given by
\begin{eqnarray}
{\sf O}
=
\left(
\begin{array}{ccc}
-\frac{1}{\sqrt{2}} &\frac{1}{\sqrt{2}} & 0 \\
\frac{1}{\sqrt{2}} & \frac{1}{\sqrt{2}} & 0 \\
0 & 0 & 1
\end{array}
\right),
\quad \text{note ${\sf O}={\sf O}^T={\sf O}^{-1}$}. 
\label{O-nondiag}
\end{eqnarray} 
In terms of the new variables ${\sf \tilde{X}}={\sf O}^T{\sf X}{\sf O}$,
${\sf \tilde{Y}}={\sf O}^T{\sf Y}{\sf O}$, equations (\ref{XY-eqn-1-so3})
keep their form 
\begin{equation}
\dot{\tilde{\sf X}}
+ [\tilde{\sf K}\tilde{\sf X}{\sf K},\tilde {\sf X}]
=  \tilde{\sf Y}'
+ [\tilde{\sf K}\tilde{\sf Y}\tilde{\sf K}, \tilde{\sf Y}]
\qquad\hbox{and}\qquad
\dot{\tilde{\sf Y}} =  \tilde{\sf X}'  - \tilde{\sf K}[\tilde{\sf Y},\tilde{\sf X}]\tilde {\sf K}
\,.
\label{XY-eqn-1-so3-tld}
\end{equation}
However, now the matrix $ \tilde{\sf K}$ is diagonal and this case is integrable,
since it reduces to the P-chiral model.
Thus, with the `gauge' transformation that diagonalizes  ${\sf K}$ and a series of canonical isomorphisms, the $\mathfrak{sp}(2)$ EP equations with Lagrangian (\ref{XY-eqn-1-so3-tld}) have also been brought into the form of the integrable $\mathfrak{so}(3)$ P-chiral model.

\section{Numerical solutions for $O(3)$ and $Sp(2)$ $P$-Chiral models}\label{JRPnumerics}

\newcommand{\bmu}{\boldsymbol{\mathsf{u}}}
\newcommand{\bmv}{\boldsymbol{\mathsf{v}}}

The equations for the $O(3)$ $P$-Chiral model (\ref{uv-eqn-1}--\ref{uv-eqn-2}) can be rewritten in the form

\begin{equation}
\label{eq_u_xi}
\partial_t \bmu(t,\xi) +\bmu(t,\xi)\times P\bmv(t,\xi)=0 \,, 
\end{equation}

\begin{equation}
\label{eq_v_eta}
\partial_t \bmv(t,\eta) +\bmv(t,\eta)\times P\bmu(t,\eta)=0 \,,
\end{equation}
for new wave-tracking variables
\[ \xi = {s}-t\,, \qquad \eta = {s}+t\,.\]
This formalism shows explicity the nature of the cross-coupling between the natural physical variables, $\bmu$ and $\bmv$. At \emph{fixed} $\xi$ (i.e. along a characteristic of fixed slope, $c=1$), $\bmu$ may vary only when it interacts with nonvanishing $\bmv$. Similarly, at fixed $\eta$ (a characteristic of fixed slope $c=-1$), $\bmv$ varies only with nonvanishing $\bmu$. Note also, as previously stated, the quantities $|\bmu|(\xi)$ and $|\bmv|(\eta)$ are fixed along their respective characteristics. This in turn implies constraints on the evolution of the original geometric varibles $X=(\bmu-\bmv)/2$ and $-(\bmv+\bmu)/2$. 

In the static ${s}$ frame, compactly supported solutions of the $G$-strand models consist of trains of generally fixed travelling profiles in $\bmu$ and $\bmv$ along the line at speed $c=\pm 1$ and undergoing rotation when they collide, or equivalently matched symmetric and antisymmetic pairs in $X$,$Y$ undergoing the same behaviour.

The model (\ref{eq_u_xi}--\ref{eq_v_eta}) is particularly tractable for numerical simulation since the absence of explicit spatial derivatives removes the numerical dispersion typical in time integration of advective operators such as appear in (\ref{uv-eqn-1}--\ref{uv-eqn-2}). The forcings due to of the nonlinear interaction terms can be modelled through standard time discretization techniques. Here we choose to use the implicit midpoint rule using cubic spline interpolation over a pointwise discretization of $\xi$ and $\eta$. 

Figure \ref{fig_waterfall_1} shows waterfall plot comparisons of the three components of the numerical solution solution $\boldsymbol{\mathsf{u}}$ and of $\boldsymbol{\mathsf{v}}$ for the standard Chiral equations, with matrix
\[ K = I, \]
and hence, from (\ref{def-P})
\[ P = I.\]
The equations are integrated on a periodic domain in $s$ with initial conditions
\begin{eqnarray}
u(0,{s})=(0,A({s}-0,25),0),&&\mbox{for }{s} \in [-0.25,0.75]
\nonumber\\ 
v(0,{s})= (A({s}-0.75),0,0),&& \mbox{for }{s} \in [0.25,1.25] 
\label{fig1-ic}\\ 
\mbox{for a Gaussian profile} &&A({s})=\exp(-256{s}^2)
\nonumber
\end{eqnarray}
This choice of profile for the experiment is effectively arbitrary and in fact any suitably smooth profile could be substituted. The observed solution behaviour is as expected, with rightward, $c=1$, travelling waves in $\bmu$ and leftward, $c=-1$, waves in $\bmv$. At points of collision the nonlinear terms cause the waves counterrotate around each other, primarily into the third component due to the value of $\bmu\times\bmv$ at $t=0$.

Following the methodology and isomorphisms of the previous section we may also apply the same numerical model and analysis for the BI strand equations (\ref{BIchain}) by specifying matrices
\[K=\left(
\begin{array}{ccc}
-1& 0& 0\\
 0& 1& 0\\
 0& 0& 1
\end{array}\right),
\quad 
P=\frac{1}{2}
\left(
\begin{array}{ccc}
 3& 0& 0\\
 0&-1& 0\\
 0& 0&-1
\end{array}\right)\,,
\]
and setting the vectors $\bmu$, $\bmv$ to map to and from symmetric matrices $X$, $Y$ through
\[\bmu=(u_1,u_2,u_3)=
\frac{1}{2}
\left(\begin{array}{cc}
u_1+u_2&u_3\\
u_3&u_1-u_2
\end{array}\right)=X-Y
\]
\[\bmv=(v_1,v_2,v_3)=
\frac{1}{2}
\left(\begin{array}{cc}
v_1+v_2&v_3\\
v_3&v_1-v_2
\end{array}\right)=-X-Y
\]
Figure \ref{fig_waterfall_2} shows the results of an integration in the BI mode of the vector initial conditions (\ref{fig1-ic}).

\paragraph{Discussion.}
The numerical results exhibit fundamentally the same behaviour, except that the isotropy breaking in $P$  introduces preferred axes to the rotation so that transfers between the second and third components of $\bmu$ and $\bmv$ are faster than motions involving the first component. Hence, in this case, we observe a rapid growth in $u_3$ in comparison to $v_3$ over the course each collision. It is in part this directionality in the rotations which allows the Bloch-Iserles ordinary differential equations (ODE) to act a sorting process.

\begin{figure}[tbp]

\includegraphics[width=\textwidth]{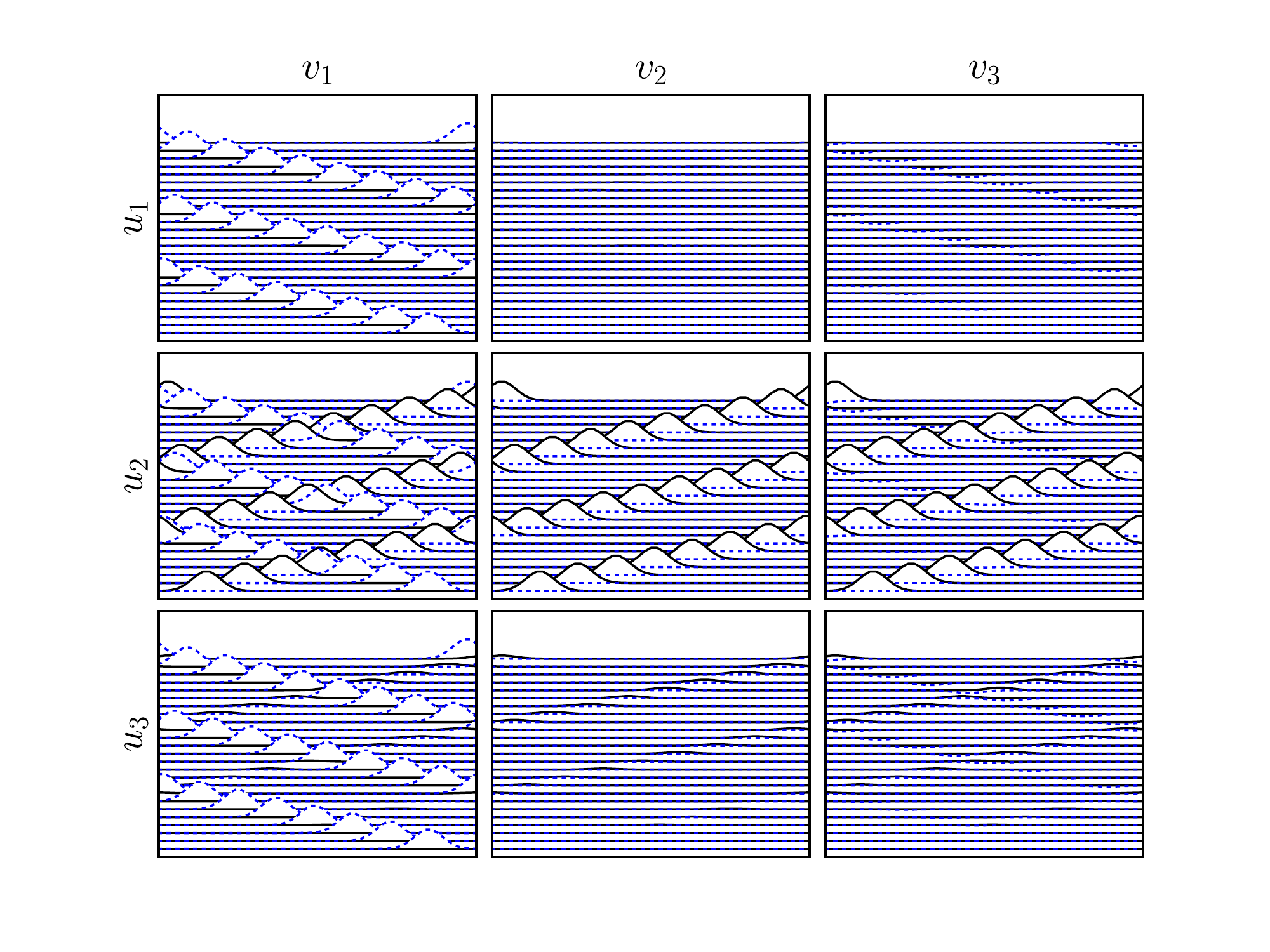}

\caption{\sf
\label{fig_waterfall_1} 
Array of waterfall plots showing the $t$ and $s$ dependence and componentwise comparison of P-Chiral variables $\bm{u}$ and $\bm{v}$ for the numerical solution of the initial value problem (\ref{fig1-ic}) in a periodic domain of unit length. Here the components of $\bm{u}$ are plotted in solid black in rows down the page and the components of $\bm{v}$ in dashed blue in columns across the page. Snapshots of profiles in $s$ are shown at various times $t$ with the vertical offset proportional to $t$.  In the absence of the signal in the other variable each profile would translate forever at unit wavespeed, rightwards for $\bm{u}$, leftwards for $\bm{v}$. This motion would be independent of the actual profile chosen. At points of interaction the variables are rotated antisymmetrically, introducing a clear signal in the cross-component direction, $u_3$, $v_3$ as well as a much less obvious response in $u_1$ and $v_2$. The magnitudes $|\bm{u}|$ and $|\bm{v}|$ are conserved in frames travelling with the wave, even through the collision.
}
\end{figure}

\begin{figure}[tbp]

\includegraphics[width=\textwidth]{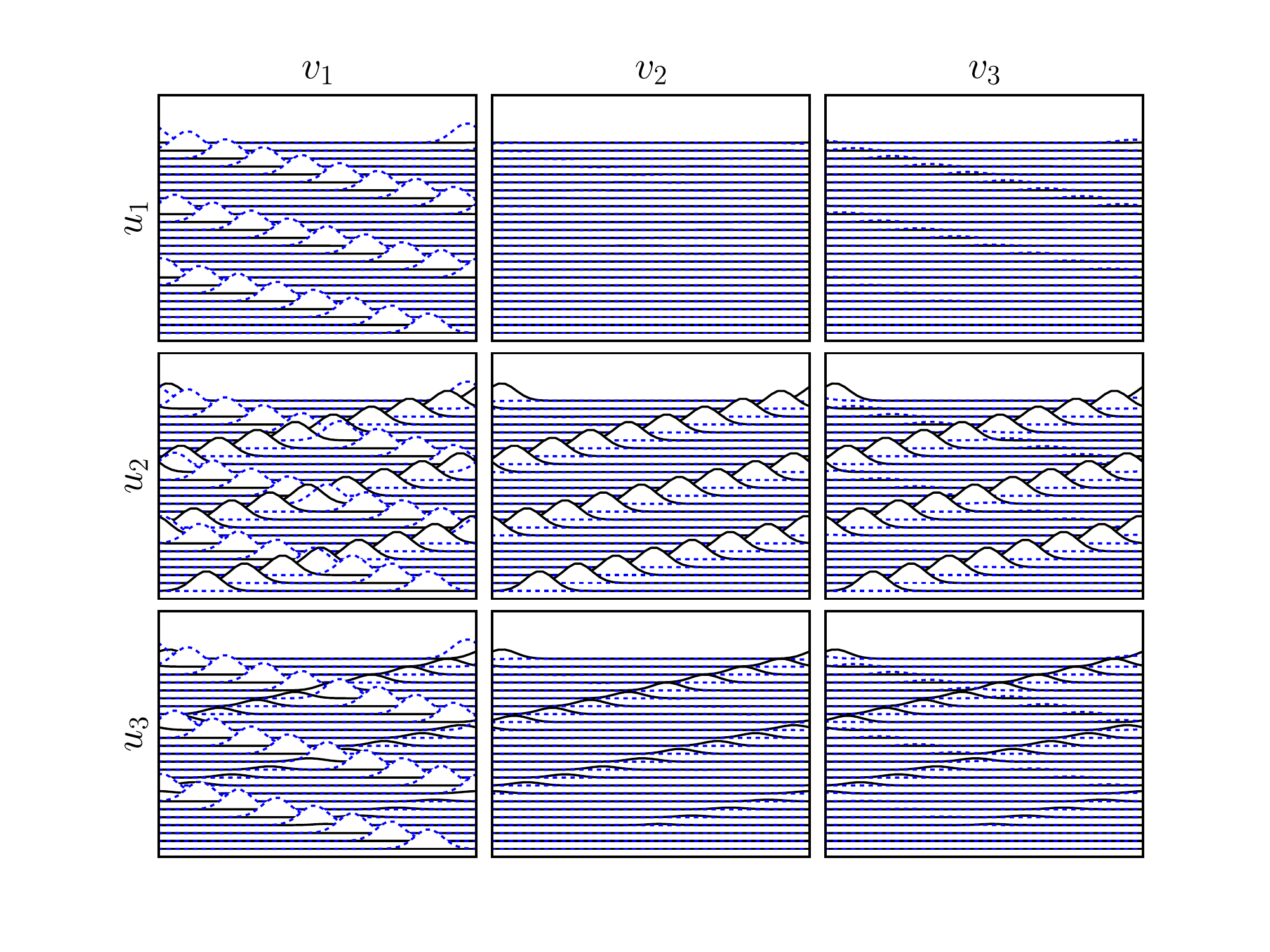}

\caption{\sf
\label{fig_waterfall_2} 
Array of waterfall plots showing the $t$ and $s$ dependence and componentwise comparison of Bloch-Iserles variables $\bm{u}$ and $\bm{v}$ for the numerical solution of the initial value problem (\ref{fig1-ic})
in a periodic domain of unit length. Here the components of $\bm{u}$ are plotted in solid black in rows down the page and the components of $\bm{v}$ in dashed blue in columns across the page. Snapshots of profiles in $s$ are shown at various times $t$ with the vertical offset proportional to $t$. In the absence of the signal in the other variable each profile would translate forever at unit wave speed, rightwards for $\bm{u}$, leftwards for $\bm{v}$. Compared with the previous figure, the much larger signal in $u_3$ post collision is obvious. This is due to the anisotropic nature of the Bloch-Iserles strand operator. The magnitudes $|\bm{u}|$ and $|\bm{v}|$ are again conserved in each  travelling frame.}
\end{figure}

Given the sorting nature of the Bloch-Iserles ODE (when $Y=0$ and $X$ is independent of ${s}$), it is natural to question whether this behaviour is observed on the spatially dependent strand. To address this question, we now consider the evolution  of the initial conditions for two forms of twisted helical strand, namely
\[
X_1=\frac{1}{2}\left(\begin{array}{cc}
\sin(2 \pi {s})+\cos(2 \pi {s})&0\\
0&\sin(2 \pi {s})-\cos(2 \pi {s})
\end{array}\right),\quad 
Y_1=\frac{1}{2}\left(\begin{array}{cc}
0&0\\
0&0
\end{array}\right)\,,\]
\[
X_2=\left(\begin{array}{cc}
\sin(2 \pi {s})&\cos(2\pi {s})\\
\cos(2\pi {s})&-\sin(2 \pi {s})
\end{array}\right),\quad 
Y_2=\frac{1}{2}\left(\begin{array}{cc}
0&0\\
0&0
\end{array}\right)\,,\]
i.e. under the mapping to vector space, 
\[ \bm{X_1}=(\sin(2 \pi {s}),\cos(2 \pi {s}),0),\qquad \bm{Y_1}=\bm{0}, \]
\[ \bm{X_2}=(0,\sin(2 \pi {s}),\cos(2 \pi {s})),\qquad \bm{Y_2}=\bm{0}, \]
or, in the P-Chiral physical variables,
\[ \bmu_1=(\sin(2 \pi {s}),\cos(2 \pi {s}),0),\qquad \bmv_1=(-\sin(2 \pi {s}),-\cos(2 \pi {s}),0). \]
\[ \bmu_2=(0,\sin(2 \pi {s}),\cos(2 \pi {s})),\qquad \bmv_2=(0,-\sin(2 \pi {s}),-\cos(2 \pi {s})). \]
These are two of  the simplest (in the spectral sense) states which can exhibits spatial dependence. In $\bmu$--$\bmv$ space the solutions would consist, in the absence of the other component of counter-rotation helices. Figures \ref{fig_eigen1} and \ref{fig_eigen2} show the variation of locus of the eigenvectors of the various matrices with time. In both cases it will be noted that the oscillatory behaviour caused by the wave motion dominates, in this sense motions on the strand retard the sorting action of the BI operator as the algorithm can only operate when characteristics in $\bmu$ and $\bmv$ originating from the same point in $S^1$ coincide. 

\begin{figure}[tbp]  
  \includegraphics[width=0.95\textwidth]{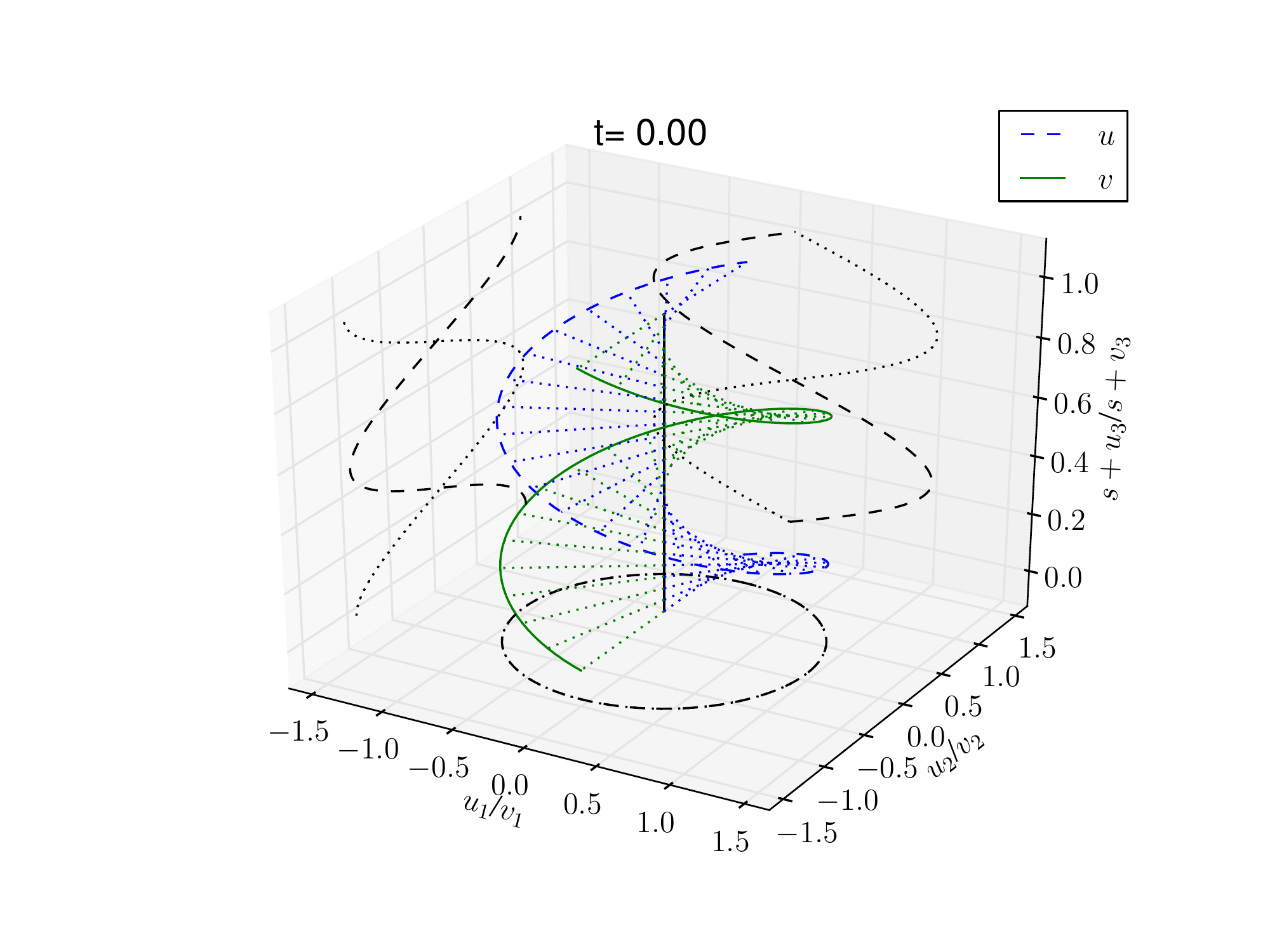}
\caption{\label{figChiralICs}\sf
Isometric plots of the helical initial conditions used in the Bloch-Iserles strand problem shown in terms of the $\bmu_1$,$\bmv_1$ variables.
In the first panel one period of the spatial domain, ${s}$ is mapped to the vertical axis, with the $\mathbb{R}^3$ vectors $\bmu_1({s})$ and $\bmv_1({s})$ plotted at 25 locations (dotted blue and green lines respectively), linking the point ${s}$ with points ${s}+\bmu_1({s})$ and ${s}+\bmv_1({s})$. The locus of these points is indicated in the solid lines. The projections of these loci onto the ${s}$-$y$, ${s}$-$z$, and $y$-$z$ planes are also indicated in dashed black for $\bmu$ and dotted black for $\bmv$. The initial condition for $\bmu_2$, $\bmv_2$ has fundamentally the same form, oriented albeit rotated to twist around the ${s}$-axis.
Animations of the evolution of these visualizations are available as supplementary materials for numerical experiments presented in this paper.}
\end{figure}

\begin{figure}[tbp]
{\centering
  (a) \includegraphics[width=0.47\textwidth]{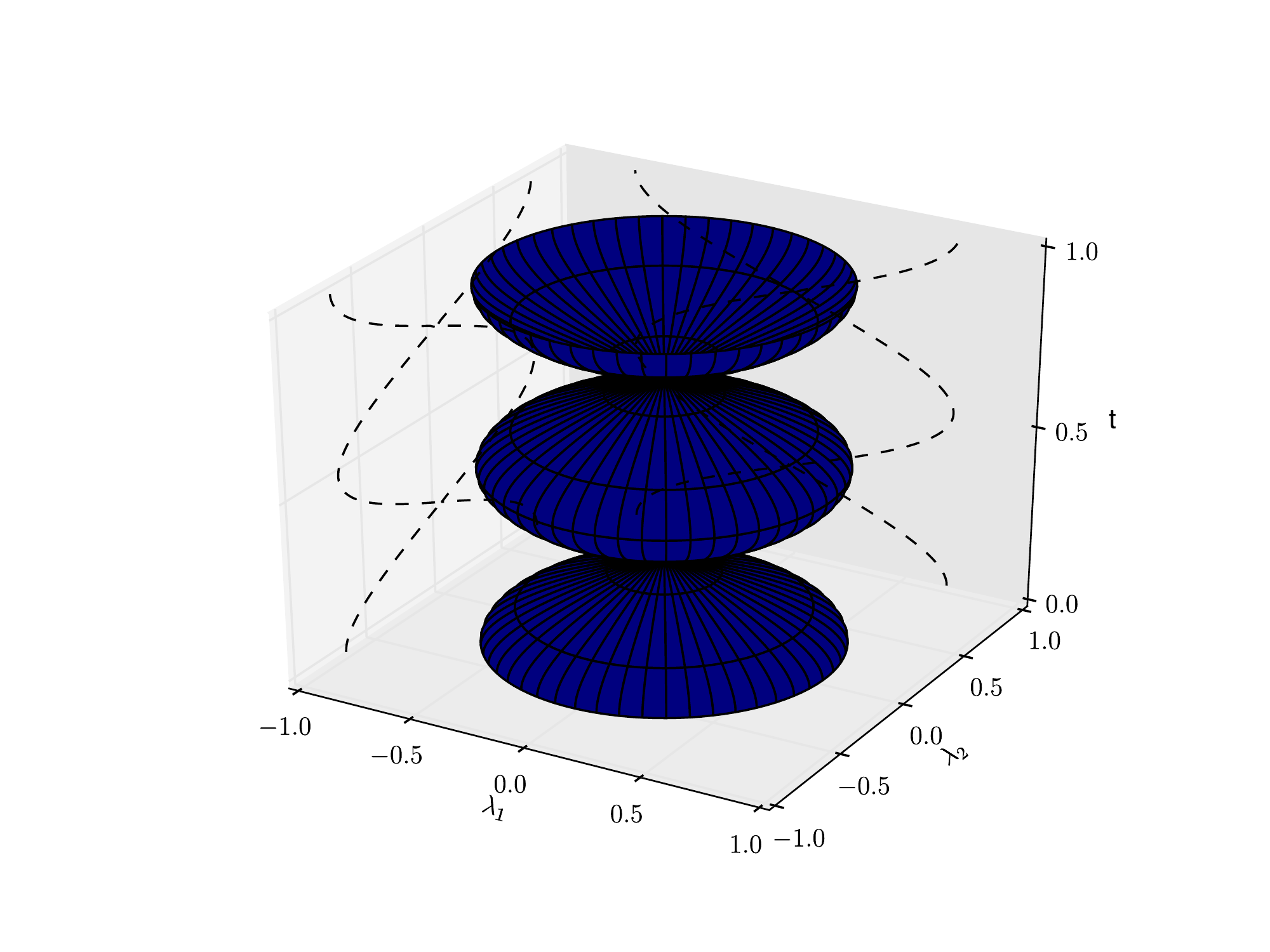}
  (b) \includegraphics[width=0.47\textwidth]{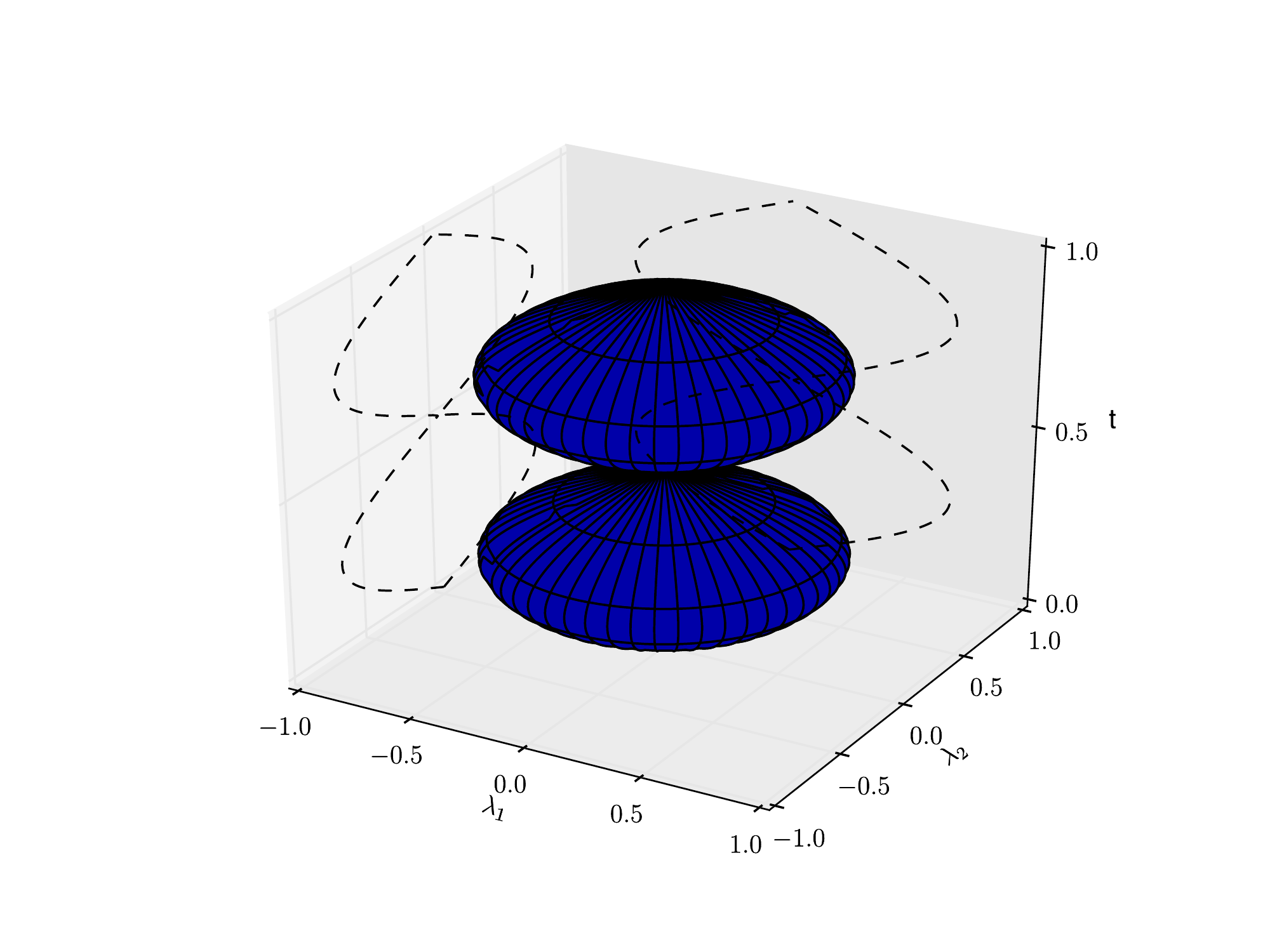}
}

\caption{\sf \label{fig_eigen1}
Isometric plots  of surfaces generated by the eigenvalues of the symmetric matrices $X({s},t)$ (Panel a) and $Y({s},t)$ (Panel b) under the  strand Bloch-Iserles equations for helical initial conditions,
\newline  \newline
\centerline{
$\bm{u}=(\sin(2\pi {s}),\cos(2\pi {s}),0),\quad\bm{v}=(\sin(2\pi {s}),\cos(2\pi {s}),0), \ \ {s}\in [0,1],
$
}
\newline\newline
on a periodic domain. Time dependence is plotted along the vertical axis, with eigenvectors $\lambda_1$,$\lambda_2$ on the horizontal axes. Linear oscillatory motions dominate, with the locus of eigenvectors lying close to a circle, whose radius varies sinusoidally with time in both variables, with a clear half period phase difference. Animations of the time evolution of these solutions are available as supplementary materials for all numerical experiments presented in this paper.} 
\label{figBIlambda1}
\end{figure}

\begin{figure}[tbp]
{\centering
  (a) \includegraphics[width=0.47\textwidth]{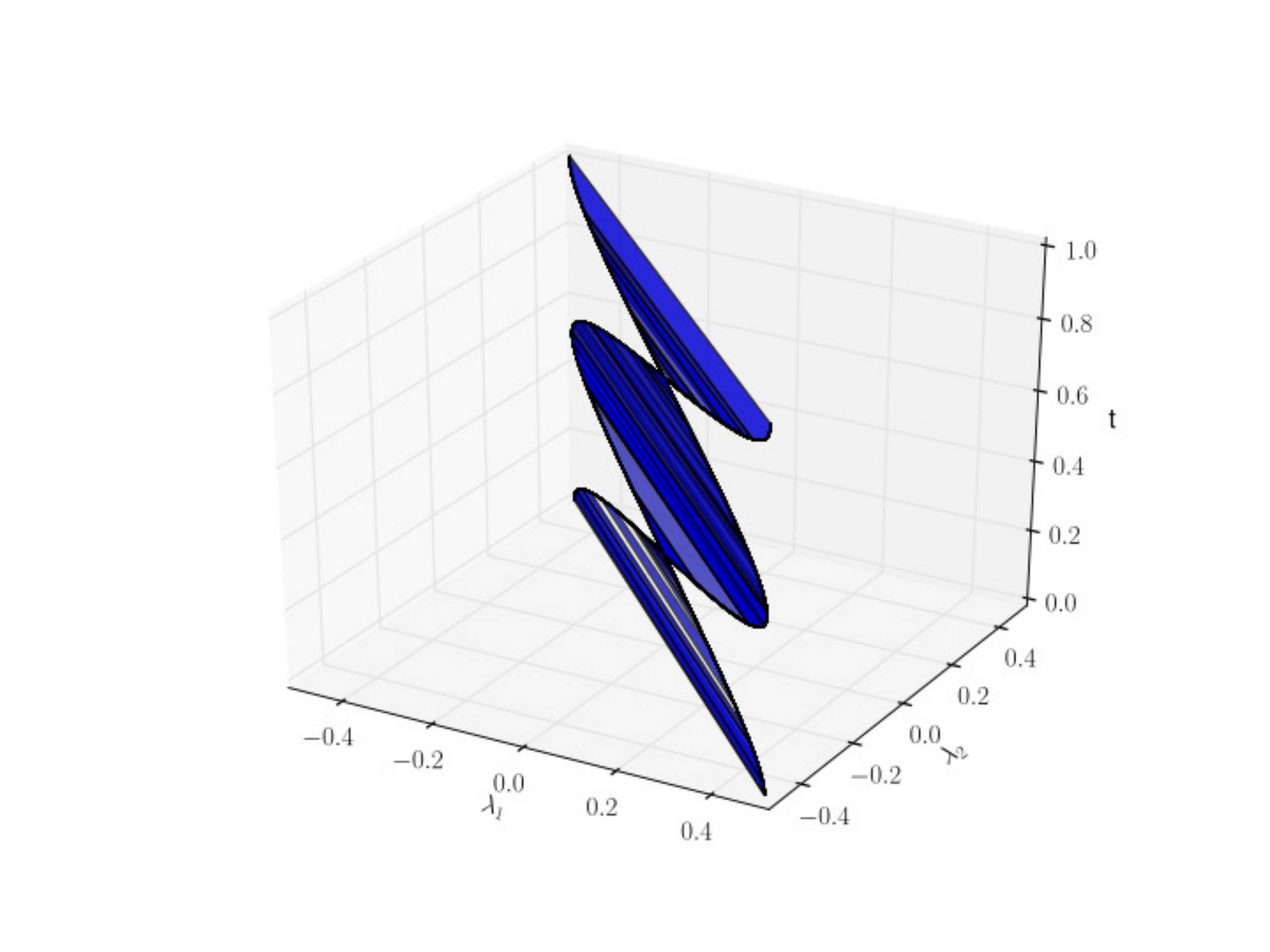}
  (b) \includegraphics[width=0.47\textwidth]{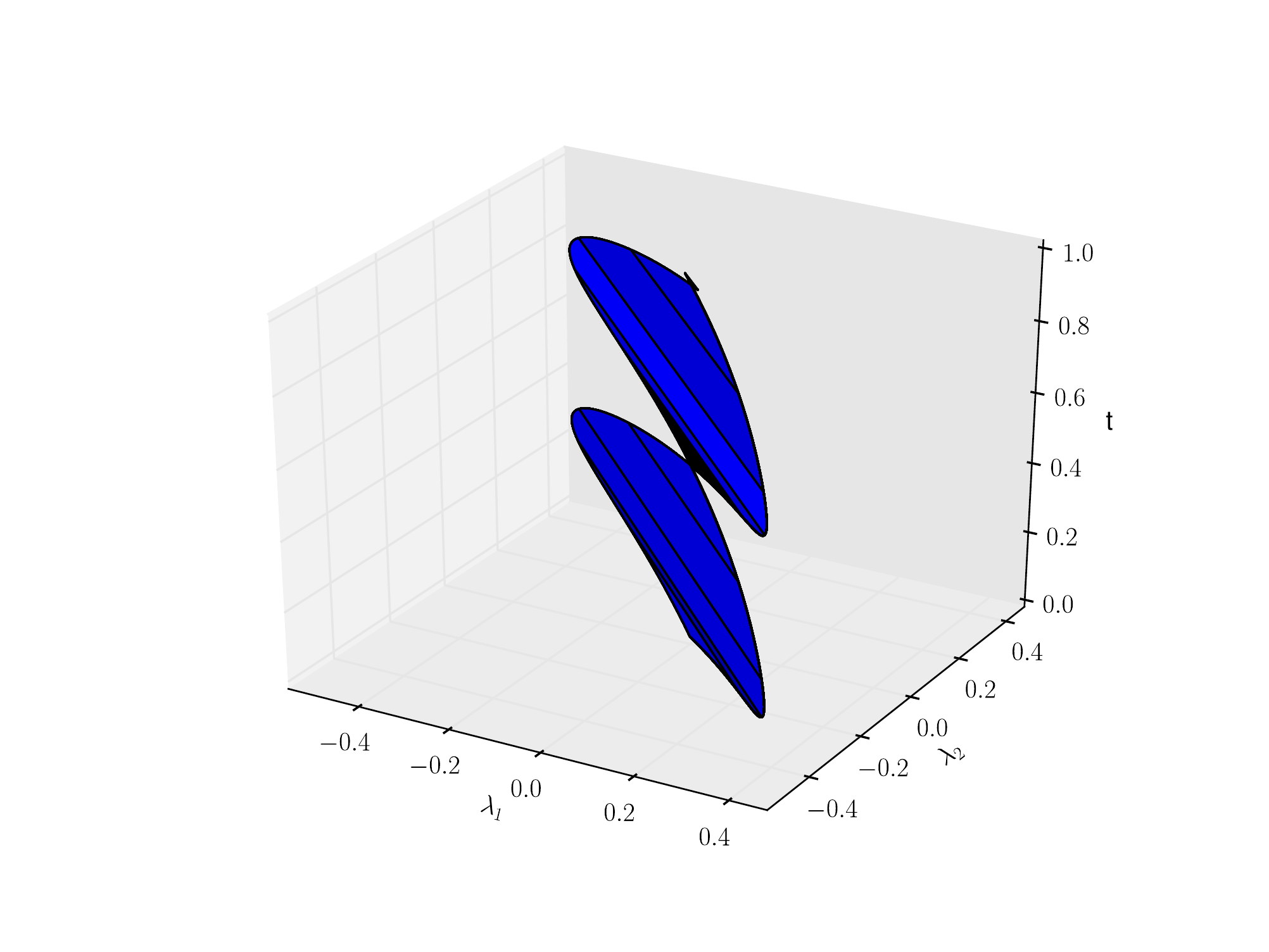}
}

\caption{\sf \label{fig_eigen2}
Isometric plots  of surfaces generated by the eigenvalues of the symmetric matrices $X({s},t)$ (Panel a) and $Y({s},t)$ (Panel b) under the  Bloch-Iserles strand equation for helical initial conditions. Time is shown vertically In all cases the linear oscillatory motion of period one dominates. Animations of the evolution of these solutions are available as supplementary materials for all numerical experiments presented in this paper.} 
\label{figBIlambda2}
\end{figure}

\newpage

\section{The $G$-strand PDE for $G={\rm Diff}(\mathbb{R})$}\label{DiffStrand-sec}
A particularly interesting $G$-strand system arises when we choose $G={\rm Diff}(\mathbb{R})$ and the Lagrangian involves the $H^1$ Sobolev norm.  This case is reminiscent of fluid dynamics and may be written naturally in terms of right-invariant tangent vectors $(t,s,x)$ and $\gamma(t,s,x)$ defined by 
\begin{equation}
\partial_t{g}= \nu \circ g
\quad\hbox{and}\quad
\partial_s{g}= \gamma \circ g
\,,
\end{equation}
where the symbol $\circ$ denotes composition of functions. In this right-invariant case, the $G$-strand PDE system in equations (\ref{aux-eqn-2time}) and (\ref{2timeEP}) for $G={\rm Diff}(\mathbb{R})$ with reduced Lagrangian $\ell(\nu,\gamma)$ takes the following form, which generalizes to $G={\rm Diff}(\mathbb{R}^d)$ in any number of spatial dimensions,
\begin{align}
\begin{split}
\frac{\partial}{\partial t} \frac{\delta\ell}{\delta \nu}
+ \frac{\partial}{\partial s}  \frac{\delta\ell}{\delta \gamma }
&= -\, {\rm ad}^*_{\nu}\frac{\delta\ell}{\delta \nu}
- {\rm ad}^*_{\gamma }\frac{\delta\ell}{\delta \gamma }
\,,
\\ 
\frac{\partial \gamma}{\partial t}   - \frac{\partial \nu}{\partial s} 
&=  {\rm ad}_\nu \gamma 
\,.
\end{split}
\label{Gstrand-eqn1R}
\end{align}
These equations form a subset of the equations studied in \cite{GBRa2008,Ho2002,HoKu1988,Tr2012} for complex fluids. They also form a subset of the equations for molecular strands studied in \cite{ElGBHoPuRa2010}. The latter comparison further justifies the name \emph{$G$-strands} for the systems being studied here. These equations may be derived from Hamilton's principle for an affine Lie group action, under which the auxiliary equation for $\gamma$ may be interpreted as an advection law. This interpretation is discussed further in \cite{ElGBHoPuRa2010,GBRa2008}.

Upon setting $m={\delta\ell}/{\delta \nu }$ and $n={\delta\ell}/{\delta \gamma }$, the right-invariant $G$-strand equations in (\ref{Gstrand-eqn1R}) for maps $\mathbb{R}\times\mathbb{R}\to G={\rm Diff}(\mathbb{R})$ in one spatial dimension may be expressed as a system of two 1+2 PDEs in $(t,s,x)$, 
\begin{align}
\begin{split}
m_t + n_s
&= 
-\, {\rm ad}^*_{\nu }m
- {\rm ad}^*_{\gamma }n
=
- (\nu m)_x - m\nu_x - (\gamma n)_x -  n\gamma_x
\,,
\\ 
\gamma_t - \nu _s
&= -\,{\rm ad}_\gamma \nu
=
-\nu \gamma_x + \gamma \nu_x 
\,.
\end{split}
\label{Gstrand-eqn2R}
\end{align}
The corresponding Hamiltonian structure for these  ${\rm Diff}(\mathbb{R})$-strand equations is obtained by Legendre transforming to 
\[
h(m,\gamma)=\langle m,\, \nu\rangle - \ell(\nu,\,\gamma)
\,.\]
One may then write the $m$-$\gamma$ equations (\ref{Gstrand-eqn2R}) 
in Lie-Poisson Hamiltonian form as
\begin{equation}
\frac{d}{dt}
\begin{bmatrix}
m \\ \gamma
\end{bmatrix}
=
\begin{bmatrix}
-\,{\rm ad}^*_\Box m  & \partial_s + {\rm ad}^*_\gamma
\\
\partial_s - {\rm ad}_\gamma  & 0
\end{bmatrix}
\begin{bmatrix}
{\delta h}/{\delta m} = \nu
\\
{\delta h}/{\delta \gamma} = -\, n
\end{bmatrix}.
\label{1stHamForm}
\end{equation}
This is the Lie-Poisson bracket dual to the action of the semidirect-product Lie algebra
\[
\mathfrak{g} 
= \mathfrak{X}(\mathbb{R})\,\circledS\, \Lambda^1({\rm Dens})(\mathbb{R})
\oplus C(\partial_s)
\]
in which $\mathfrak{X}(\mathbb{R})$ is the space of vector fields and $\Lambda^1({\rm Dens})(\mathbb{R})$  is the space of 1-form densities on the real line $\mathbb{R}$, plus a generalized 2-cocycle $C(\partial_s)$. In the $\gamma-\nu$ entry of the Hamiltonian matrix in (\ref{1stHamForm}), one recognizes the vector-field covariant derivative in $s$, and finds its negative adjoint operator in the $\nu-\gamma$ entry. For more details about how such Lie-Poisson Hamiltonian structures arise in complex fluids with finite-dimensional order-parameter groups (broken symmetries) and full discussions of their Lie algebraic properties, see \cite{GBRa2008,Ho2002,HoKu1988,Tr2012} and references therein. The differences in sign between the Hamiltonian matrix in (\ref{1stHamForm}) and those in (\ref{LP-Ham-struct-symbol}) and (\ref{LP-Ham-struct-sp2}) are due to the differences between right-invariance here and left-invariance in the other two cases.
\subsection{Semi-stationary solutions}
\begin{remark}
For $s$-independent solutions, the ${\rm Diff}(\mathbb{R})$ equations  (\ref{Gstrand-eqn2R}) reduce to EP equations for a Lagrangian defined on $T({\rm Diff}(\mathbb{R})\times \mathfrak{X}(\mathbb{R}))/{\rm Diff}_{\mathfrak{X}_0}(\mathbb{R}))$, in which ${\rm Diff}_{\mathfrak{X}_0}(\mathbb{R})\subset{\rm Diff}(\mathbb{R})$ is the isotropy subgroup of the vector field parameter $\mathfrak{X}_0(\mathbb{R})$. Namely, for solutions that are functions of $t$ and $x$,
\begin{align}
\begin{split}
m_t
&= 
-\, {\rm ad}^*_{\nu }m
+ {\rm ad}^*_{\gamma }(-n)
=
- (\nu m)_x - m\nu_x - (\gamma n)_x -  n\gamma_x
\,,
\\ 
\gamma_t
&= -\, {\rm ad}_\gamma \nu  
=
-\nu \gamma_x + \gamma \nu_x 
\,.
\end{split}
\label{m-gamma-eqns}
\end{align}
The EP equations for the reduced Lagrangian $\ell(\nu,\gamma)$ represent ideal fluid dynamics with an advected vector field, $\gamma$. It's Lie-Poisson form is given by the Hamiltonian matrix in (\ref{1stHamForm}), without the generalized 2-cocycle $C(\partial_s)$.

The semi-stationary equations (\ref{m-gamma-eqns}) may be interpreted as a one-dimensional analogue of ideal incompressible magnetohydrodynamics (iiMHD) in three dimensions \cite{ArKh1998,HoMaRa1998}. 
To compare the semi-stationary versions of (\ref{Gstrand-eqn2R}) with the iiMHD EP equations, we recall the latter in three dimensions for fluid velocity $u\in\Lambda^1(\mathbb{R}^3)$ and magnetic field $B\in\mathfrak{X}(\mathbb{R}^3)$, which may be rewritten in the present geometric notation as 
\begin{align}
\begin{split}
u_t &= -\, {\rm ad}^*_{u }u + {\rm ad}^*_{B }B 
\,,\\
B_t &= - {\rm ad}_{B }u
\,,
\end{split}
\label{mhd-raw}
\end{align}
with ${\rm div}\,u=0$ and ${\rm div}\,B=0$.
One sees the similarity with (\ref{m-gamma-eqns}) immediately.
These iiMHD equations arise as EP equations from Hamilton's principle with Lagrangian \cite{HoMaRa1998}
\begin{align}
\ell (u,B) = \tfrac12 \int \left(u^2 - B^2\right) d^3x 
= \frac12\, \| u \|^2_{L^2} - \frac12\, \| B  \|^2_{L^2}
\label{mhd-Lag}
\,.\end{align}
Equations (\ref{mhd-raw}) may be written in Lie-Poisson Hamiltonian form (\ref{1stHamForm}) as
\begin{equation}
\frac{d}{dt}
\begin{bmatrix}
u \\ B
\end{bmatrix}
=
\begin{bmatrix}
-\,{\rm ad}^*_\Box u  &  {\rm ad}^*_B
\\
-\, {\rm ad}_B  & 0
\end{bmatrix}
\begin{bmatrix}
{\delta h}/{\delta u} = u
\\
{\delta h}/{\delta B} = B
\end{bmatrix},
\label{mhdHamForm}
\end{equation}
with MHD Hamiltonian $h (u,B)$ given by the sum of the kinetic and magnetic energies,
\begin{equation}
h (u,B) = \tfrac12 \int \left(u^2 + B^2\right) d^3x 
\,.
\label{mhdHam}
\end{equation}
For further discussion of the EP theory underlying the equations for ideal fluid dynamics with advected quantities, see \cite{HoMaRa1998}.
\end{remark}

\subsection{Relation to the Camassa-Holm equation}
An {interesting subcase} of the system of semi-stationary ${\rm Diff}(\mathbb{R})$-strand equations (\ref{m-gamma-eqns}) arises when one selects the Lagrangian expressed only in terms of $\nu$ as its $H^1$ norm on the real line,
\begin{equation}
\ell (\nu ,\gamma ) = \frac12 \|\nu \|^2_{H^1} 
\,,
\label{CH-pkn-Lag}
\end{equation}
with vanishing boundary conditions, as $|x|\to\infty$. In this case, $m = \nu - \nu_{xx}$, 
so the semi-stationary $G$-strand equations in (\ref{m-gamma-eqns}) provide an extension of the completely integrable Camassa-Holm (CH) equation  \cite{CaHo1993},
\begin{equation}
m_t 
= 
-\, {\rm ad}^*_{\nu }m
=
- (\nu m)_x - m\nu_x 
\quad\hbox{with}\quad
m = \frac{\delta\ell}{\delta \nu } = \nu - \nu_{xx}
\,.
\label{CH-eqn}
\end{equation}
Specifically, these modified $G$-strand equations reduce in the absence of  $\gamma$-dependence to the CH equation, which admits singular solutions known as \emph{peakons} \cite{CaHo1993}
\begin{align}
m(t,x) = \sum_a M_a(t)\delta(x-Q^a(t))
\,,
\label{pkn-soln}
\end{align}
where we sum in $a\in \mathbb{Z}$ over the integers, or over any subset of the integers. 
The peakon solution (\ref{pkn-soln}) of the CH equation may be understood as a singular momentum map, as discussed in \cite{HoMa2004}.

\subsection{Peakon solutions of the ${\rm Diff}(\mathbb{R})$-strand  equations}
 With the choice of Lagrangian using the $H^1$ norm, 
 \begin{equation}
\ell (\nu ,\gamma ) = \frac12 \|\nu \|^2_{H^1} - \frac12 \|\gamma \|^2_{H^1} 
\,,
\label{Gstrand-pkn-Lag}
\end{equation}
 the ${\rm Diff}(\mathbb{R})$-strand equations (\ref{Gstrand-eqn2R}) admit peakon solutions in both momenta $m$ and $n$, with continuous velocities $\nu$ and $\gamma$. We state this result in the following theorem.
\begin{theorem}\label{Gstrand-HP}
The ${\rm Diff}(\mathbb{R})$-strand equations (\ref{Gstrand-eqn2R}) 
admit singular solutions 
\begin{align}
m(s,t,x) &= \sum_a M_a(s,t)\delta(x-Q^a(s,t))
\,,\qquad
n(s,t,x) = \sum_a N_a(s,t)\delta(x-Q^a(s,t))
\,,
\label{Gstrand-singsolns}
\\
\nu(s,t,x) & =K*m=\sum_a M_a(s,t) K(x,Q^a)
\,,\qquad
\gamma(s,t,x)  = K*n=\sum_a N_a(s,t) K(x,Q^a)
\,,
\nonumber
\end{align}
that are peakons for $K(x,y)= \frac12  e^{-|x-y|}$.
These singular solutions follow from Hamilton's principle $\delta S=0$ for the constrained action $S=\int L(\nu,\gamma,Q)\,dt$ given by
\begin{eqnarray*}
S = \int \ell(\nu,\gamma) 
+ \sum_a M_a(s,t) \big( \partial_t Q^a(s,t) - \nu(Q^a,s,t) \big)  
+ \sum_a   N_a(s,t) \big( \partial_s Q^a(s,t) - \gamma(Q^a,s,t) \big)
ds\,dt
\,.
\end{eqnarray*}
and they are \emph{peakons} for the Lagrangian 
$\ell(\nu,\gamma) $ given in equation (\ref{Gstrand-pkn-Lag}).
\end{theorem}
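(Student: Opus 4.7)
The plan is to verify the theorem by applying Hamilton's principle $\delta S = 0$ to the constrained action in the statement and showing that the resulting Euler-Lagrange equations consist of exactly two ingredients: the singular ansatz \eqref{Gstrand-singsolns} for $(m,n,\nu,\gamma)$, and a finite-dimensional system of evolution equations for $(M_a,N_a,Q^a)$. One then verifies that substituting the ansatz into the $\mathrm{Diff}(\mathbb{R})$-strand PDEs \eqref{Gstrand-eqn2R} reproduces precisely this finite-dimensional system, so the ansatz indeed solves the PDEs. The peakon kernel $K(x,y)=\frac{1}{2}e^{-|x-y|}$ enters because it is the Green's function of $1-\partial_x^2$: for the $H^1$ Lagrangian \eqref{Gstrand-pkn-Lag} the Legendre relation $m=\delta\ell/\delta\nu=(1-\partial_x^2)\nu$ inverts to $\nu=K\ast m$, and similarly for $\gamma$ in terms of $n$, up to a sign.

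First I would compute the five classes of variations of $S$. Free variation in $\nu$, regarded as an independent field on $\mathbb{R}$, produces $\delta\ell/\delta\nu -\sum_a M_a\,\delta(x-Q^a)=0$, which is exactly the singular ansatz $m=\sum_a M_a\delta(x-Q^a)$; variation in $\gamma$ gives the analogous statement for $n$. Variations in the multipliers $M_a$ and $N_a$ return the two characteristic constraints $\partial_t Q^a=\nu(Q^a,s,t)$ and $\partial_s Q^a=\gamma(Q^a,s,t)$. Finally, varying $Q^a$ and integrating by parts in both $t$ and $s$ yields the weight evolution
\begin{equation*}
\partial_t M_a + \partial_s N_a = -\,M_a\,\nu_x(Q^a,s,t) - N_a\,\gamma_x(Q^a,s,t).
\end{equation*}

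Next I would confirm consistency with the full PDE system \eqref{Gstrand-eqn2R}. The auxiliary zero-curvature equation follows from cross-differentiating the two characteristic constraints: $\partial_s\partial_t Q^a = \partial_t\partial_s Q^a$ yields $\gamma_t-\nu_s = \gamma\nu_x - \nu\gamma_x$ evaluated at $x=Q^a$, matching the second line of \eqref{Gstrand-eqn2R} on the singular support. For the momentum equation, one substitutes the delta-function ansatz into $m_t+n_s+(\nu m)_x+m\nu_x+(\gamma n)_x+n\gamma_x=0$; the characteristic conditions make all $\delta'(x-Q^a)$ contributions cancel identically, and reading off the coefficient of $\delta(x-Q^a)$ reproduces exactly the weight ODE above.

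The main obstacle is giving rigorous meaning to the nonlinear products $\nu m$ and $\gamma n$ at the singular support $x=Q^a$, where $\nu=K\ast m$ and $\gamma=K\ast n$ are only continuous (they have a peaked corner there) and where $\nu_x,\gamma_x$ have distinct one-sided limits. The standard resolution, automatic in the variational approach, is to interpret $\nu_x(Q^a)$ and $\gamma_x(Q^a)$ as the arithmetic mean of the right- and left-hand limits; equivalently, these singular solutions realize a momentum map from the cotangent bundle of an embedding space into $\mathfrak{g}^*$ in the sense of \cite{HoMa2004}. Once this convention is adopted, the reduction of \eqref{Gstrand-eqn2R} to the peakon ODEs and the derivation via the constrained Hamilton principle coincide, completing the proof.
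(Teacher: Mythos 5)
Your proposal is correct and takes essentially the same route as the paper: the five stationarity conditions of the constrained action yield the two singular momentum maps, the characteristic constraints $\partial_t Q^a=\nu(Q^a)$, $\partial_s Q^a=\gamma(Q^a)$, and the evolution equations for $M_a$ and $N_a$, with the $N_a$-equation coming from cross-differentiating the constraints (equivalently, the compatibility equation evaluated on the support $x=Q^a$), and the peakon kernel entering as the Green's function of $1-\partial_x^2$ for the $H^1$ Lagrangian. Your additional explicit check that the delta-function ansatz satisfies the momentum equation, with the $\delta'(x-Q^a)$ terms cancelling by the characteristic constraints and the $\delta(x-Q^a)$ coefficient reproducing the weight equation under the mean-value convention for $\nu_x,\gamma_x$ at the corner, is a verification the paper leaves implicit through the singular momentum-map interpretation, and it does not alter the argument.
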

\begin{proof}
After replacing 
\[
\nu(Q^a,s,t) = \int \nu(x,s,t)\delta(x-Q^a(t,s))\, dx
\quad\hbox{and}\quad
\gamma(Q^a,s,t) = \int \gamma(x,s,t)\delta(x-Q^a(t,s))\, dx
\,,\]
one computes $\delta S$ in Hamilton's principle $\delta S =0$ for the Lagrangian $\ell(\nu,\gamma) $ as,
\begin{align*}
\delta S &= \int 
\left\langle \frac{\delta\ell}{\delta \nu} 
- \sum_a M_a(t,s)\delta(x-Q^a(t,s)),  \delta \nu \right\rangle
+
\left\langle \frac{\delta\ell}{\delta \gamma} 
- \sum_a N_a(t,s)\delta(x-Q^a(t,s))
\,, \delta \gamma \right\rangle
ds\,dt
\\&\quad
+ \int
  \sum_a  \delta M_a (s,t) \left(  \partial_t Q^a(s,t) - \nu(Q^a,s,t) \right)  
+ \sum_a\delta N_a(s,t) \left( \partial_s Q^a(s,t) - \gamma(Q^a,s,t) \right)
ds\,dt
\\&\quad
- 
\int \sum_a \left( \partial_t M_a + \partial_s N_a
+ \sum_b\left(
\frac{\partial \nu(Q^b)}{\partial Q^a} M_b  
+ \frac{\partial \gamma(Q^b)}{\partial Q^a} N_b \right)
\right)\delta Q^a \, ds\,dt
\,.\end{align*}
Hence, stationarity of the constrained action $S$ in the statement of the theorem implies
\begin{align}
\begin{split}
\delta \nu:&\quad  \frac{\delta\ell}{\delta \nu} 
= \sum_a M_a(t,s)\delta(x-Q^a(t,s))
\,,\\
\delta \gamma:&\quad \frac{\delta\ell}{\delta \gamma} 
= \sum_a N_a(t,s)\delta(x-Q^a(t,s))
\,,\\
\delta M_a: &\quad \partial_t Q^a(s,t) = \nu(Q^a,s,t)
\,,\\[2mm]
\delta N_a: &\quad  \partial_s Q^a(s,t) = \gamma(Q^a,s,t)
\,,\\
\delta Q^a:&\quad 
\partial_t M_a + \partial_s N_a
+ \sum_b \left(\frac{\partial \nu(Q^b)}{\partial Q^a} M_b
 + \frac{\partial \gamma(Q^b)}{\partial Q^a} N_b\right) =0
\,.
\end{split}
\label{momap}
\end{align}
The first two equations recover the forms of the singular solutions from equation (\ref{Gstrand-singsolns}) in the statement of the theorem. These expressions are \emph{singular momentum maps}, as discussed in detail in \cite{HoMa2004}.
The corresponding velocities are expressed as
\begin{align}
\nu(x,s,t)=K*m=\sum_b M_b(s,t) K(x,Q^b)
\,,\quad
\gamma(x,s,t)=K*n=\sum_b N_b(s,t) K(x,Q^b)
\,,
\label{nu-gamma-K-eqn} 
\end{align}
for a  symmetric, positive-definite kernel $K(x,y)$.

Inserting the forms of the solutions for the tangent vectors, or velocities $\nu$ and $\gamma$ in (\ref{nu-gamma-K-eqn}) into the compatibility condition in (\ref{Gstrand-eqn2R}) yields
\begin{align}
\begin{split}
\sum_b K(x,Q^b)
&
 \left[ -\partial_t N_b(s,t) + \partial_s M_b
+    \sum_c (N_bM_c - M_bN_c) \frac{\partial K(x,Q^c)}{\partial
x}\right]
\\&
+\sum_{b,c} K^{bc}\frac{\partial K(x,Q^b)}{\partial
x} (N_bM_c - M_bN_c)=0
\,.
\end{split}
\label{nu-gamma-x-eqn}
\end{align}
Evaluating this formula at $x=Q^a(s,t)$ then yields a relation among the solution parameters as
\begin{align}
&\sum_b\left[ K^{ab}(\partial_s M_b   -  \partial_t N_b)
+    \sum_c (N_bM_c - M_bN_c) \frac{\partial K^{ac}}{\partial Q^a}
(K^{ab} -K^{cb})\right] = 0
\,,
\label{MNK-sumeqn}
\end{align}
where we have introduced the matrix notation $K^{ab}:=K(Q^a,Q^b)$.
The symmetric matrix $K^{ab}=K^{ba}$ follows from the choice of kernel and is assumed to be invertible. Invertibility of $K^{ab}$ in the compatibility equation (\ref{MNK-sumeqn}) implies a relation 
\begin{align}
\partial_s M_e   -  \partial_t N_e
+    \sum_{a,b,c} (N_bM_c - M_bN_c) \frac{\partial K^{ac}}{\partial Q^a} (K^{ab}-K^{cb})(K^{-1})_{ea}
= 0
\,,
\label{MNK-eqn}
\end{align}
which may also be regarded as an evolution equation for the Lagrange multiplier $N_e(s,t)$.

Equation (\ref{MNK-sumeqn}), or equivalently equation (\ref{MNK-eqn}) also arises from applying equality of cross derivatives $\partial^2_{st}Q=\partial^2_{ts}Q$ to the two constraint equations in (\ref{momap}) imposed by the Lagrange multipliers $M_a$ and $N_a$. This may be seen by substituting the two constraint relations into the equations for the cross derivatives, 
\begin{align}
\partial^2_{ts}Q^a = \partial_s \nu(Q^a,s,t) +  \sum_c \frac{\partial \nu (Q^a)}{\partial Q^c}\gamma (Q^c)
= \partial_t \gamma(Q^a,s,t) +  \sum_c \frac{\partial \gamma(Q^a)}{\partial Q^c}\nu(Q^c)
= \partial^2_{st}Q
\,,
\label{cross-deriv}
\end{align}
whereupon rearranging then yields equation (\ref{MNK-sumeqn}) and invertibility of $K^{ab}$ implies (\ref{MNK-eqn}).

Finally, the $\delta Q^a$ equation in the set (\ref{momap}) is given by, \begin{align}
&
\partial_t M_a + M_a\sum_b M_b \frac{\partial K^{ab}}{\partial Q^a} 
+ \partial_s N_a + N_a \sum_b N_b \frac{\partial K^{ab}}{\partial Q^a} =0
\,,
\label{dQ-eqn}
\end{align}
which is an evolution equation for the Lagrange multiplier $M_a(s,t)$.

For the Lagrangian in (\ref{Gstrand-pkn-Lag}) we have $\nu=K*m$ and $\gamma=K*n$ for the symmetric positive definite kernel $K(x,y)=\frac12e^{-|x-y|}$, which means that in this case matrix elements of $K^{ab}$ take the peakon shape $K^{ab}=\frac12e^{-|Q^a-Q^b|}$.  The corresponding singular solutions (\ref{Gstrand-singsolns}) are peakons for the Lagrangian in (\ref{Gstrand-pkn-Lag}). 
\end{proof}

\paragraph{Summary.}
Upon collecting equations, we have the following three results. 
\begin{enumerate}
\item
The singular solutions of the $G$-strand equations for $G={\rm Diff}(\mathbb{R})$ are represented by two singular momentum maps,
\begin{align}
\begin{split}
m(s,t,x) &= \sum_a M_a(s,t)\delta(x-Q^a(s,t))
\,,\quad
T_t^*{\rm Emb}(\mathbb{S},\mathbb{R})\to \mathfrak{X}_t^*(\mathbb{R})
\,,\\
n(s,t,x) &= \sum_a N_a(s,t)\delta(x-Q^a(s,t))
\,,\quad
T_s^*{\rm Emb}(\mathbb{S},\mathbb{R})\to \mathfrak{X}_s^*(\mathbb{R})
\,,
\end{split}
\label{Gstrand-momap}
\end{align}
where $\mathbb{S}\in\mathbb{R}$ is the support set of the delta functions, and ${\rm Emb}(\mathbb{S},\mathbb{R})$ denotes the set of smooth
embeddings $Q : \mathbb{S}\to\mathbb{R}$. 
\item
The two tangent vectors, or velocities, that correspond to these momentum maps are, 
\begin{align}
\begin{split}
\nu(t,s,x) & =K*m=\sum_a M_a(s,t) K(x,Q^a(s,t))
\,,\\
\gamma(t,s,x) & = K*n=\sum_a N_a(s,t) K(x,Q^a(s,t))
\,.
\end{split}
\label{Gstrand-vel}
\end{align}
These arise from the convolution $K*: \mathfrak{X}^*(\mathbb{R})\to\mathfrak{X}(\mathbb{R})$ of $m$ and $n$ with the kernel $K$.
\item
The solution parameters $\{Q^a(s,t), M_a(s,t), N_a(s,t)\}$ with $a\in\mathbb{Z}$ that specify the singular solutions (\ref{Gstrand-singsolns}) are determined by the following set of evolutionary PDEs in $s$ and $t$, in which we denote $K^{ab}:=K(Q^a,Q^b)$:
\begin{align}
\begin{split}
\partial_t Q^a(s,t) &= \nu(Q^a,s,t) = \sum_b M_b(s,t) K^{ab}
\,,\\
\partial_s Q^a(s,t) &= \gamma(Q^a,s,t) =\sum_b N_b(s,t) K^{ab}
\,,\\
\partial_t M_a(s,t) &= -\, \partial_s N_a
-\sum_c (M_aM_c+N_aN_c) \frac{\partial K^{ac}}{\partial Q^a} 
\quad\hbox{(no sum on $a$),}
\\
\partial_t N_e(s,t) &=\partial_s M_e  
+    \sum_{a,b,c} (N_bM_c - M_bN_c) \frac{\partial K^{ac}}{\partial Q^a} (K^{ab}-K^{cb})(K^{-1})_{ea}
\quad\hbox{(do sum on $a$).}
\end{split}
\label{Gstrand-eqns}
\end{align}
\end{enumerate}
The last pair of equations in (\ref{Gstrand-eqns}) may be solved as a system for the momenta, or Lagrange multipliers $(M_a,N_e)$, then used in the previous pair to update the support set of positions $Q^a(t,s)$. Given $Q^a(t,s)$ for $a\in\mathbb{Z}$, one constructs $(m,n(t,s,x))$ along the solution paths $x=Q^a(t,s)$ from equations (\ref{Gstrand-momap}) and then obtains $(\nu,\gamma(t,s,x))$ for $x\in\mathbb{R}$ from equations (\ref{Gstrand-vel}). Alternatively, knowing the position $Q^a(s,t)$, $a=1,\dots,n$, for all $s$ at a given time $t$, also determines $N_a$ upon inverting the matrix $K^{ab}$ in the second equation in (\ref{Gstrand-eqns}).


\subsection{Canonical Hamiltonian form of the ${\rm Diff}(\mathbb{R})$-strand peakon dynamics}

The Diff$(\mathbb{R})$-strand peakon equations (\ref{Gstrand-eqns}) may be written in a \emph{canonical} Hamiltonian form, after making a Legendre transformation to a \emph{Routhian}, or constrained Hamiltonian, by forming
\begin{align}
\begin{split}
H(M,Q) &=  \int \Big[ \sum_a M_a\partial_t Q^a\Big] ds - L(\nu,\gamma,Q)
\\
&= \int \Big[\int \sum_a M_a\nu(x) \delta(x-Q^a) dx - \ell(\nu,\gamma) 
- \sum_a N_a(\partial_s Q^a - \gamma(Q^a))\Big]ds
\,,\end{split}
\end{align}
where $L(\nu,\gamma,Q)$ denotes the constrained Lagrangian appearing  in Theorem \ref{Gstrand-HP},
\begin{align}
\begin{split}
L(\nu,\gamma,Q) &:=  \int \Big[ \ell(\nu,\gamma) 
+ \sum_a
\left\langle  M_a(s,t), \partial_t Q^a(s,t) - \nu(Q^a,s,t) \right\rangle
\\ &\hspace{1in}
+ \sum_a\left\langle  N_a(s,t), \partial_s Q^a(s,t) - \gamma(Q^a,s,t) \right\rangle\Big]
ds
\,.
\end{split}
\end{align}
The variations of the resulting Routhian $H$ are given by
\begin{align}
\begin{split}
\delta H(M,Q) &= \int \bigg[
\bigg\langle
\sum_a M_a\delta(x-Q^a) - \frac{\delta \ell}{\delta \nu}, \delta \nu 
\bigg\rangle
+
\bigg\langle
\sum_a N_a\delta(x-Q^a) - \frac{\delta \ell}{\delta \gamma}, \delta \gamma \bigg\rangle
\\[2mm]
&\qquad 
- \sum_a \big(\partial_s Q^a - \gamma(Q^a)\big)\delta N_a
\\
&\qquad 
+ \sum_a \nu(Q^a) \delta M_a
+ \sum_{a,b} \bigg( M_b \frac{\partial \nu(Q^b)}{\partial Q^a}
 +
\partial_s N_a + N_b \frac{\partial \gamma(Q^b)}{\partial Q^a} \bigg) 
 \delta Q^a
 \bigg] ds
 \,,
\end{split}
\end{align}
where the angle brackets $\langle\,\cdot\,,\,\cdot\,\rangle$ denote $L^2$ pairing in $x\in\mathbb{R}$. Vanishing of the first two terms recovers the two singular moment maps in (\ref{Gstrand-momap}). Vanishing of the third term imposes the second constraint in (\ref{Gstrand-eqns}). Finally, the remaining two terms produce the corresponding \emph{canonical}  equations for the Routhian $H$ with canonically conjugate variables $Q^a(s,t)$ and $M_a(s,t)$,
\begin{align}
\begin{split}
\partial_t Q^a 
&= \frac{\delta H}{\delta M_a} = \nu(Q^a)
\,,\\
\partial_t M_a 
&= -\, \frac{\delta H}{\delta Q^a} 
=
-\,\partial_s N_a
-
\sum_{b} \bigg( M_b \frac{\partial \nu(Q^b)}{\partial Q^a}
 + N_b \frac{\partial \gamma(Q^b)}{\partial Q^a} \bigg) 
.
\end{split}
\end{align}
Compatibility of the constraint $\partial_s Q^a = \gamma(Q^a)$ and the canonical equation $\partial_t Q^a = \nu(Q^a)$ then recovers the last equation in (\ref{Gstrand-eqns}) for the evolution of the Lagrange multiplier $N_a$. This completes the canonical Hamiltonian formulation of the singular solutions of the Diff$(\mathbb{R})$-strand equations in (\ref{Gstrand-eqns}). The conserved energy of the peakons for the Lagrangian in (\ref{Gstrand-pkn-Lag}) is
\begin{align}
E= \frac12\int \sum_a M_a\nu(Q^a) + N_a\gamma(Q^a)\,ds
= \frac12\int \sum_{a,b} M_aK^{ab}M_b + N_aK^{ab}N_b\,ds
\,.
\label{Diff(R)-pkn-erg}
\end{align}

\subsection{Alternative forms of the ${\rm Diff}(\mathbb{R})$-strand equations}

\paragraph{Velocities, Eulerian.}
The ${\rm Diff}(\mathbb{R})$-strand equations (\ref{Gstrand-eqn2R}) may be written equivalently in terms of the velocities $\nu$ and $\gamma$, in Eulerian form as 
\begin{align}
\begin{split}
\partial_t \nu + \nu\nu_x + \partial_s \gamma + \gamma\gamma_x
&=
-\,\partial_x P(\nu,\gamma)
\,,\\[1mm]
\partial_t \gamma + \gamma\nu_x
&=
\partial_s \nu + \nu\gamma_x
\,,\\
\hbox{where}\quad 
P(\nu,\gamma) := K*&\Big( \nu^2 + \frac12 \nu_x^2
+ \gamma^2+\frac12 \gamma_x^2\Big).
\end{split}
\label{DiffRstrand-velocity-form}
\end{align}

\paragraph{Velocities, Lagrangian.}
Equations (\ref{DiffRstrand-velocity-form}) themselves may be rewritten equivalently by using Lagrangian time derivatives in $s$ and $t$, in the compact form,
\begin{align}
\begin{split}
\frac{d\nu}{dt}\bigg|_\nu + \frac{d\gamma}{ds}\bigg|_\gamma
&=
-\,\partial_x P(\nu,\gamma)
\,,\\
\frac{d\gamma}{dt}\bigg|_\nu - \frac{d\nu}{ds}\bigg|_\gamma
&=
0
\,.
\end{split}
\end{align}

\paragraph{Momenta, Lagrangian.}
Going back to the momentum 1-form densities $m\,dx^2$ and $n\,dx^2$ allows us to rewrite equations (\ref{Gstrand-eqn2R}) once more equally compactly using Lagrangian time derivatives as
\begin{align}
\begin{split}
\frac{d}{dt}\bigg|_{{dx}/{dt}=\nu}\hspace{-5mm}(m\,dx^2) 
+ \frac{d}{ds}\bigg|_{{dx}/{ds}=\gamma} \hspace{-5mm} (n\,dx^2) 
&=
0
\,,\\[2mm]
\frac{d}{dt}\bigg|_{{dx}/{dt}=\nu}\hspace{-5mm}\gamma\
-\
 \frac{d}{ds}\bigg|_{{dx}/{ds}=\gamma} \hspace{-5mm} \nu 
&= 0
\,.
\end{split}
\end{align}
\paragraph{Outlook:}
Studies of the solution behavior and the issue of integrability of the $G$-strand equations (\ref{Gstrand-eqn2R}) for $G={\rm Diff}(\mathbb{R})$ with the Lagrangian (\ref{Gstrand-pkn-Lag}) may be expected in future investigations along the present lines. For example, the velocity form of the ${\rm Diff}(\mathbb{R})$-strand equations in (\ref{DiffRstrand-velocity-form}) may be useful in determining whether they produce wave breaking from smooth initial conditions, by studying their slope dynamics at a moving inflection point of both velocities to prove a steepening lemma.

\paragraph{Higher spatial dimensions.}
The $G$-strand equations may also be written in higher spatial dimensions. For example, the $G$-strand equations for volume-preserving diffeomorphisms in three dimensions with $G={\rm SDiff}(\mathbb{R}^3)$ may be written in terms of divergence-free vector fields as 
\begin{align}
\begin{split}
\partial_t \boldsymbol{\omega} 
+ [\boldsymbol{\nu},\boldsymbol{\omega}] 
+ \partial_s \boldsymbol{\Omega} 
+ [\boldsymbol{\gamma},\boldsymbol{\Omega}] &= 0
\,,\quad\hbox{with}\quad 
\boldsymbol{\omega} = {\rm curl}\,\boldsymbol{\nu}
\quad\hbox{and}\quad 
\boldsymbol{\Omega} = {\rm curl}\,\boldsymbol{\gamma}
\,,\\
\partial_t \boldsymbol{\nu} - \partial_s \boldsymbol{\gamma} 
- [\boldsymbol{\nu},\,\boldsymbol{\gamma}] &=0
\,,\quad\hbox{where}\quad 
 {\rm div}\,\boldsymbol{\nu} = 0
 \quad\hbox{and}\quad 
 {\rm div}\,\boldsymbol{\gamma} = 0
\,.
\end{split}
\end{align}
It may be interesting to study the point-vortex dynamics of these equations in two dimensions.

\section{Conclusions}\label{conclusion-sec}

 
This paper has provided a general framework for studying the $G$-strand equations, the PDE system comprising the Euler-Poincar\'e (EP) variational equations (\ref{2timeEP}) for a G-invariant Lagrangian, coupled to the auxiliary \emph{zero-curvature} equation (\ref{aux-eqn-2time}). The latter has often been the departure point and main focus in other approaches, because it sets up the Lax-pair formulation of the system.
 
Once the G-invariant Lagrangian has been specified, the EP and zero-curvature system of $G$-strand equations (\ref{2timeEP}) and (\ref{aux-eqn-2time}) follow automatically.  In the present paper, the well-known $SO(3)$ chiral model PDE system has been extended by placing it into the $G$-strand framework and then deriving its Lie-Poisson Hamiltonian formulation on the dual Lie algebra $so(3)^*$. The corresponding Lie-Poisson bracket given in equation (\ref{LP-Ham-struct-symbol}) turns out to be the same as for a perfect complex fluid \cite{Ho2002}. It also appears in the Lie-Poisson brackets for Yang-Mills fluids \cite{GiHoKu1982,GiHoKu1983} and for spin glasses \cite{HoKu1988}. The $SO(3)$ chiral model itself emerges in the $G$-strand framework for the very simple choice of quadratic Hamiltonian given in equation (\ref{chiral-Ham}). 

In this framework, two new integrable 1+1 EP PDE systems have been identified in Section \ref{SO3K-sec} and Section \ref{SP2-sec}  that may be transformed into the $SO(3)$ P-chiral model.  These are the $SO(3)_K$ and $Sp(2)$ $G$-strand equations in which the corresponding Hamiltonians are sums of squares. 
Other types of integrable P-chiral models are also available. The various Lie algebras for these may be classified according to the signature ${\rm sgn}(\sf K)$ of the real symmetric matrix $\sf K$, defined as the triple $(l,m,n)$ of the numbers of its positive, null, and negative eigenvalues, respectively.
 
Finally, the dynamics of the soliton solutions for these two integrable systems has been studied numerically and their coherent nonlinear scattering behavior has been explored for some simple initial conditions in Section \ref{JRPnumerics}. 

Other systems may be treated using the same approach. For example, one may formulate the $G$-strand equations for the two-dimensional Toda lattice, in terms of an appropriate Lie-algebra of lower-triangular matrices, as reviewed in \cite{Gu2006}. The multicomponent Bloch-Iserles system may also be formulated in this way. 
These other systems are likely to be fruitful subjects of future investigations of the $G$-strand PDE, including the $G$-strands on the diffeomorphisms, introduced in Section \ref{DiffStrand-sec}.

\paragraph{Acknowledgments.} We thank our friends A. M. Bloch, C. J. Cotter, F. Gay-Balmaz, A. Iserles, T. S. Ratiu and C. Tronci for their kind encouragement and thoughtful remarks during the course of this work. 
We also thank the referees for their encouraging remarks and suggestions, particularly for encouraging us to pursue the development of  the $G$-strands on the diffeomorphisms, introduced in Section \ref{DiffStrand-sec}. 
DDH gratefully acknowledges partial support by the Royal Society of London's Wolfson Award scheme and the European Research Council's Advanced Grant. The research of DDH and RII in partially supported by the EU Seventh Framework programme FP7/2007-2013 under grant agreement No 235419. JRP thanks the US Office of Naval Research for support.


\end{document}